\documentclass[12pt,a4paper]{article}
  \setlength{\oddsidemargin}{-0.4mm} 
  \setlength{\evensidemargin}{\oddsidemargin}
  \setlength{\textwidth}{160mm}      
  \setlength{\topmargin}{-5.4mm}     
  \setlength{\headheight}{5mm}
  \setlength{\headsep}{5mm}
  \setlength{\footskip}{10mm}
  \setlength{\textheight}{237mm}     
\usepackage[cmex10]{amsmath}
\usepackage{amssymb}
\usepackage[noend]{algpseudocode} 

\usepackage{array}

\usepackage{eqparbox}

 	\usepackage[pdftex]{graphicx, color}
  	\DeclareGraphicsExtensions{.pdf,.png,.jpg,.eps}
	\graphicspath{{figures/}}

\usepackage{epsfig}
\usepackage{graphicx}
\usepackage{epstopdf}
\DeclareGraphicsExtensions{.pdf,.png,.jpg,.eps}
\graphicspath{{figures/}}
\makeatletter
\def\input@path{{./figures/}}
\makeatother

\usepackage{url}
\usepackage{hyperref}
\usepackage{breakurl}

\makeatletter
\def\input@path{{./figures/}}
\makeatother

\newtheorem{theorem}{Theorem}[section]
\newtheorem{lemma}[theorem]{Lemma}

\newcommand{\sq}{\hbox{\rlap{$\sqcap$}$\sqcup$}}
\newcommand{\qed}{\hspace*{\fill}\sq}
\newenvironment{proof}{\noindent {\bf Proof.}\ }{\qed\par\vskip 4mm\par}

\newcommand{\remove}[1]{{}}


\algdef{SE}[DOWHILE]{Do}{doWhile}{\algorithmicdo}[1]{\algorithmicwhile\ #1}

\algnewcommand\EMPTY{\textbf{EMPTY}}

\algblockdefx[NAME]{Start}{End}%
	[1]{\textbf{Struct} #1:}%
	{EndStruct}
\algnotext[NAME]{End}

\algblockdefx[Name]{START}{END}%
	[1]{#1}%
	{Ending}
\algnotext[Name]{END}

\hyphenation{op-tical net-works semi-conduc-tor}

\begin{document}
%
\title{DeltaTree: A Practical Locality-aware Concurrent Search Tree \\
(\it IFI-UIT Technical Report \bf2013-74)}

\date{October 10, 2013}


\author{Ibrahim Umar
\qquad Otto J. Anshus 
\qquad Phuong H. Ha 
\\ \texttt{\{ibrahim.umar, otto.anshus, phuong.hoai.ha\}@uit.no}
\\
\\ \small High Performance and Distributed Systems Group
\\ \small Department of Computer Science
\\ \small UiT The Arctic University of Norway
\\ \small (formerly University of Troms{\o})}

\maketitle

\begin{abstract}
As other fundamental programming abstractions in energy-efficient computing,
search trees are expected to support both high parallelism and data locality.
However, existing highly-concurrent search trees such as red-black trees and AVL
trees, do not consider data locality while existing locality-aware search trees
such as those based on the van Emde Boas layout (vEB-based trees), poorly
support concurrent (update) operations.

This paper presents DeltaTree, a practical locality-aware concurrent search tree
that combines both locality-optimisation techniques from vEB-based trees and
con-currency-optimisation techniques from non-blocking highly-concurrent search
trees.  DeltaTree is a $k$-ary leaf-oriented tree of DeltaNodes in which each
DeltaNode is a size-fixed tree-container with the van Emde Boas layout.
The expected memory transfer costs of DeltaTree's {\em Search, Insert} and {\em
Delete} operations are $O(log_B N)$, where $N, B$ are the tree size and the {\em
unknown} memory block size in the ideal cache model, respectively.
 DeltaTree's {\em Search} operation is wait-free, providing prioritised lanes for {\em
 Search} operations, the dominant operation in search trees. Its {\em Insert}
 and {\em Delete} operations are non-blocking to other {\em Search, Insert} and
 {\em Delete} operations, but they may be occasionally blocked by maintenance
 operations that are sometimes triggered to keep DeltaTree in good shape. Our
 experimental evaluation using the latest implementation of AVL, red-black, and speculation
 friendly trees from the Synchrobench benchmark has shown that DeltaTree is up
 to 5 times faster than all of the three concurrent search trees for searching
 operations and up to 1.6 times faster for update operations when the update
 contention is not too high.
\end{abstract}

\clearpage
\tableofcontents
\clearpage
\listoffigures
\clearpage 
\listoftables 
\clearpage

\section{Introduction}

Energy efficiency is becoming a major design constraint in current and future
computing systems ranging from embedded to high performance computing (HPC)
systems.  In order to construct energy efficient software systems, data
structures and algorithms must support not only high parallelism but also data
locality \cite{Dally11}.
Unlike conventional locality-aware data structures and algorithms that concern
only whether data is on-chip (e.g. data in cache) or not (e.g. data in DRAM),
new energy-efficient data structures and algorithms must consider data locality
in finer-granularity: where on chip the data is. It is because in modern
manycore systems the energy difference between accessing data in nearby
memories (2pJ) and accessing data across the chip (150pJ) is almost two orders
of magnitude, while the energy difference between accessing on-chip data (150pJ)
and accessing off-chip data (300pJ) is only two-fold \cite{Dally11}. Therefore,
fundamental data structures and algorithms such as search trees need to support
both high parallelism and fine-grained data locality.

However, existing highly-concurrent search trees do not consider fine-grained
data locality. The highly concurrent search trees includes non-blocking
\cite{EllenFRB10, Brown:2011:NKS:2183536.2183551} and Software Transactional
Memory (STM) based search trees
\cite{Afek:2012:CPC:2427873.2427875, BronsonCCO10,
Crain:2012:SBS:2145816.2145837, DiceSS2006}. The prominent highly-concurrent search trees included in several benchmark distributions are the concurrent red-black
tree \cite{DiceSS2006} developed by Oracle Labs and the  concurrent AVL tree
developed by Stanford \cite{BronsonCCO10}. The highly concurrent trees,
however, do not consider the tree layout in memory for data locality.

Concurrent B-trees \cite{BraginskyP12, Comer79, Graefe:2010:SBL:1806907.1806908,
Graefe:2011:MBT:2185841.2185842}
are optimised for a known memory block size $B$ (e.g. page size) to minimise the
number of memory blocks accessed during a search, thereby improving data
locality. 
As there are different block sizes at different levels of  the memory hierarchy
(e.g.
register size, SIMD width,  cache line size and page size) that can be utilised
to design locality-aware layout for search trees \cite{KimCSSNKLBD10},
concurrent B-trees limits its spatial locality optimisation to the memory level
with block size $B$, leaving memory accesses to the other memory levels
unoptimised.
For example, if the concurrent B-trees are optimised for accessing disks (i.e.
$B$ is the page size), the cost of searching a key in a block of size $B$ in
memory is $\Theta (log (B/L))$ cache line transfers, where $L$ is the cache line
size \cite{BrodalFJ02}. Since each memory read basically contains only one node
of size $L$ from a top down traversal of a path in the search tree of $B/L$
nodes, except for the topmost $\lfloor log(L+1) \rfloor$ levels. Note that the
optimal cache line transfers in this case is $O(log_L B)$, which is achievable
by using the van Emde Boas layout.

A van Emde Boas (vEB) tree is an ordered dictionary data type which implements
the idea of recursive structure of priority
queues \cite{vanEmdeBoas:1975:POF:1382429.1382477}.
The efficient structure of the vEB tree, especially how it arranges data
in a recursive manner so that related values are placed in contiguous memory
locations, has inspired cache oblivious (CO) data structures \cite{Prokop99} 
such as CO B-trees \cite{BenderDF05, BenderFGK05, BenderFFFKN07} and CO
binary trees \cite{BrodalFJ02}. These researches have demonstrated that the
locality-aware structure of the vEB layout is a perfect fit for cache oblivious algorithms,
lowering the upper bound on memory transfer complexity.

\begin{figure}[!t]
\centering \scalebox{0.8}{\input{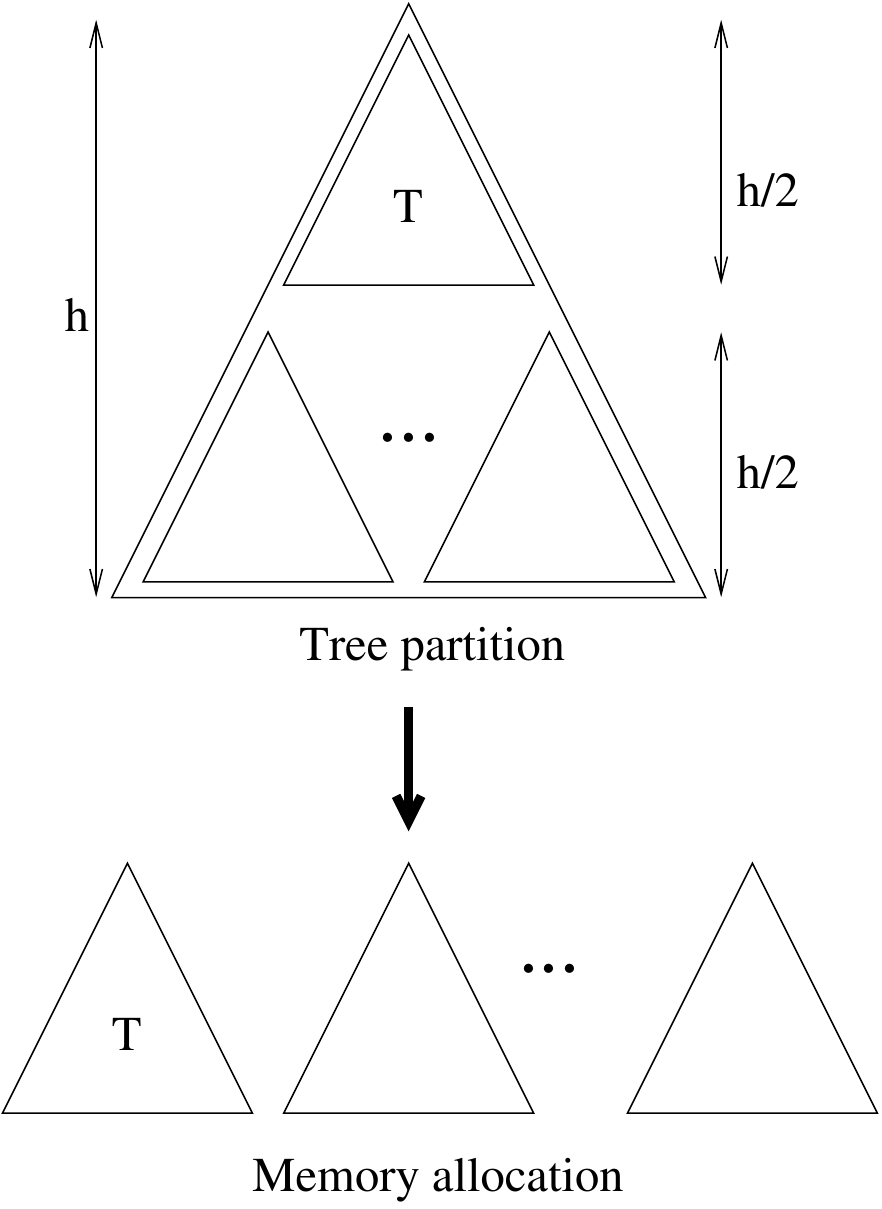_t}}
\caption{An illustration for the van Emde Boas layout}\label{fig:vEB}
\end{figure}

Figure \ref{fig:vEB} illustrates the vEB layout. A tree of height $h$ is
conceptually split between nodes of heights $h/2$ and $h/2 + 1$, resulting in a
top subtree $T$ of height $h/2$ and  $m = 2^{h/2}$ bottom subtrees $B_1, B_2,
\cdots, B_m$ of height $h/2$. The $(m+1)$ subtrees are located in contiguous
memory locations in the order $T, B_1, B_2, \cdots, B_m$. Each of the subtrees of
height $h/2^i, i \in \mathbb{N},$ is recursively partitioned into $(m+1)$
subtrees of height $h/2^{i+1}$ in a similar manner, where $m=2^{h/2^{i+1}}$,
until each subtree contains only one node. With the vEB layout, the search cost is $O(log_B N)$ memory
transfers, where $N$ is the tree size and $B$ is the {\em unknown} memory block
size in the I/O \cite{AggarwalV88} or
ideal-cache\cite{Frigo:1999:CA:795665.796479} model.
The search cost is optimal and matches the search bound of B-trees that requires
the memory block size $B$ to be known in advance. More details on the vEB layout
are presented in Section \ref{sec:dynamic_veb}.

The vEB-based trees, however, poorly support concurrent update operations.
Inserting or deleting a node in a tree may result in relocating a large part of
the tree in order to maintain the vEB layout. For example, inserting a node in
full subtree $T$ in Figure \ref{fig:vEB} will affect the other subtrees $B_1,
B_2, \cdots, B_m$  due to shifting them to the right in the memory, or even
allocating a new contiguous block of memory for the whole tree, in order to have
space for the new node \cite{BrodalFJ02}. Note that the subtrees $T, B_1, B_2,
\cdots, B_m$ must be located in {\em contiguous} memory locations according to
the vEB layout. The work in  \cite{BenderFGK05} has discussed the problem but
not yet come out with a feasible implementation \cite{BraginskyP12}.

We introduce $\Delta$Tree, a novel locality-aware concurrent search
tree that combines both locality-optimisation techniques from vEB-based trees
and concurrency-optimisation techniques from non-blocking highly-concurrent
search trees. Our contributions are threefold:

\begin{itemize}
  \item We introduce a new {\em relaxed} cache oblivious model and a novel {\em
  dynamic} vEB layout that makes the vEB layout suitable for highly-concurrent 
  data structures with update operations. The dynamic vEB layout supports
  dynamic node allocation via pointers while maintaining the optimal search 
  cost of $O(log_B N)$ memory transfers without knowing the exact value of $B$
  (cf. Lemma \ref{lem:dynamic_vEB_search}). 
  The new relaxed cache-oblivious model and dynamic vEB layout are presented in
  Section \ref{sec:dynamic_veb}.
   
  \item Based on the new dynamic vEB layout, we develop $\Delta$Tree, a novel
  locality-aware concurrent search tree. $\Delta$Tree is a $k$-ary leaf-oriented
  tree of $\Delta$Nodes in which each $\Delta$Node is a size-fixed
  tree-container with the van Emde Boas layout. The expected memory transfer
  costs of $\Delta$Tree's {\em Search, Insert} and {\em Delete} operations are
  $O(log_B N)$, where $N$ is the tree size and $B$ is the {\em unknown} memory
  block size in the ideal cache model \cite{Frigo:1999:CA:795665.796479}.
  $\Delta$Tree's {\em Search} operation is wait-free while its {\em Insert} and
  {\em Delete} operations are non-blocking to other {\em Search, Insert} and
  {\em Delete} operations, but they may be occasionally blocked by maintenance
  operations. $\Delta$Tree overview is presented in Section \ref{sec:overview}
  and its detailed implementation and analysis are presented in Section
  \ref{sec:implementation}.
  
  \item We experimentally evaluate $\Delta$Tree on commodity machines, comparing
  it with the prominent concurrent search trees such as AVL trees
  \cite{BronsonCCO10}, red-black trees \cite{DiceSS2006} and speculation
  friendly trees \cite{Crain:2012:SBS:2145816.2145837} from the Synchrobench
  benchmark \cite{synchrobench}. The experimental results show that $\Delta$Tree
  is up to 5 times faster than all of the three concurrent search trees for
  searching operations and up to 1.6 times faster for update operations when the
  update contention is not too high. We have also developed a concurrent version
  of the sequential vEB-based tree in \cite{BrodalFJ02} using GCC's STM in order
  to gain insights into the performance characteristics of concurrent vEB-based
  trees. The detailed experimental evaluation is presented in Section
  \ref{sec:evaluation}. 
 The code of the $\Delta$Tree and its experimental evaluation are 
 available upon request. 
\end{itemize}

\section{Dynamic Van Emde Boas Layout}\label{sec:dynamic_veb}

\subsection{Notations}

We first define these notations that will be used hereafter in this paper:
\begin{itemize}

\item $b_i$ (unknown): block size in term of nodes at level $i$ of memory
hierarchy (like $B$ in the I/O model
\cite{AggarwalV88}), which is unknown as in the cache-oblivious
model \cite{Frigo:1999:CA:795665.796479}. When the specific level $i$ of memory
hierarchy is irrelevant, we use notation $B$ instead of $b_i$ in order to be
consistent with the I/O model.

\item $UB$ (known): the upper bound (in terms of the number of nodes) on the
block size $b_i$ of all levels $i$ of the memory hierarchy.

\item {\em $\Delta$Node}: the coarsest recursive subtree of a vEB-based search
tree that contains at most $UB$ nodes (cf. dash triangles of height $2^L$ in
Figure \ref{fig:search_complexity}). $\Delta$Node is a size-fixed tree-container
with the vEB layout.

\item Let $L$ be the level of detail of $\Delta$Nodes. Let $H$ be the height
of a $\Delta$Node, we have $H = 2^L$. For simplicity, we assume $H = log_2
(UB+1)$.

\item $N, T$: size and height of the whole tree in terms of basic nodes (not in
terms of $\Delta$Nodes).

\item $density(r) = n_r / UB$ is the density of $\Delta$Node rooted at
$r$,  where $n_r$ the number of nodes currently stored in the $\Delta$Node.

\end{itemize}

\subsection{Static van Emde Boas (vEB) Layout}

The conventional {\em static} van Emde Boas (vEB) layout has been introduced in
cache-oblivious data structures \cite{BenderDF05, BrodalFJ02,
Frigo:1999:CA:795665.796479}. Figure \ref{fig:vEB} illustrates the vEB layout.
Suppose we have a complete binary tree with height $h$. For simplicity, we
assume $h$ is a power of 2, i.e. $h=2^k$.
The tree is recursively laid out in the memory as follows. The tree is
conceptually split between nodes of height $h/2$ and $h/2+1$, resulting in a top
subtree $T$ and $m_1 = 2^{h/2}$ bottom subtrees $B_1, B_2, \cdots, B_m$ of
height $h/2$.  The $(m_1 +1)$ top and bottom subtrees are then located in
consecutive memory locations in the order of subtrees $T, B_1, B_2, \cdots,
B_m$. Each of the subtrees of height $h/2$ is then laid out similarly to $(m_2 +
1)$ subtrees of height $h/4$, where $m_2 = 2^{h/4}$. The process continues until
each subtree contains only one node, i.e. the finest {\em level of detail}, 0.
Level of detail $d$ is a partition of the tree into recursive subtrees of
height at most $2^d$.

The main feature of the vEB layout is that the cost of any search in this layout
is $O(log_B N)$ memory transfers, where $N$ is the tree size and $B$ is the {\em
unknown} memory block size in the I/O \cite{AggarwalV88} or 
ideal-cache \cite{Frigo:1999:CA:795665.796479} model. The search cost is the
optimal and matches the search bound of B-trees that requires the memory block
size $B$ to be known in advance. Moreover, at any level of detail, each subtree
in the vEB layout is stored in a contiguous block of memory.

Although the vEB layout is helpful for utilising data locality, it poorly
supports concurrent update operations. Inserting (or deleting) a node at
position $i$ in the contiguous block storing the tree may restructure a large
part of the tree stored after node $i$ in the memory block. For example, inserting
new nodes in the full subtree $A$ in Figure \ref{fig:vEB} will  affect the other
subtrees $B_1, B_2, \cdots, B_m$ by shifting them to the right in order to have
space for new nodes. Even worse, we will need to allocate a new contiguous block of
memory for the whole tree if the previously allocated block of memory for
the tree runs out of space \cite{BrodalFJ02}. Note that we cannot use dynamic
node allocation via pointers since at {\em any} level of detail, each subtree in the vEB layout must be stored in a {\em
contiguous} block of memory.

\subsection{Relaxed Cache-oblivious Model and Dynamic vEB Layout}

\begin{figure}[!t] 
\centering \scalebox{0.8}{\input{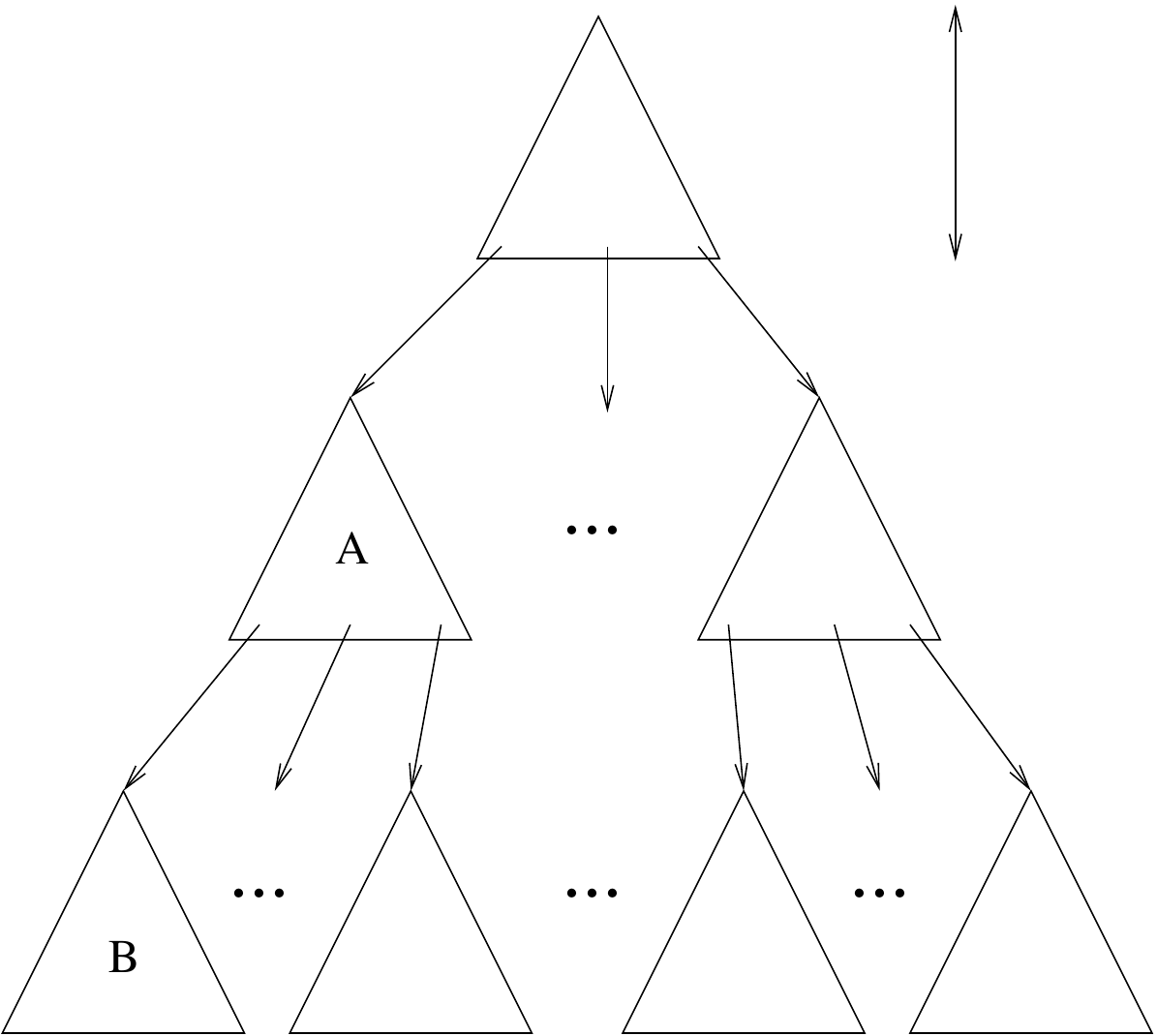_t}}
\caption{An illustration for the new dynamic vEB layout}\label{fig:dynamicVEB}
\end{figure}

In order to make the vEB layout suitable for highly concurrent data structures
with update operations, we introduce a novel {\em dynamic} vEB layout. Our key
idea is that if we know an upper bound $UB$ on the unknown memory block size
$B$, we can support dynamic node allocation via pointers while maintaining the
optimal search cost of $O(log_B N)$ memory transfers without knowing $B$ (cf.
Lemma \ref{lem:dynamic_vEB_search}). 

We define {\em relaxed cache oblivious} algorithms to be cache-oblivious (CO)
algorithms with the restriction that an upper bound $UB$ on the unknown memory
block size $B$ is known in advance. As long as an upper bound on all the block
sizes of multilevel memory is known, the new relaxed CO model maintains the key
feature of the original CO model, namely analysis for a simple two-level memory
are applicable for an unknown multilevel memory (e.g. registers, L1/L2/L3 caches
and memory). This feature enables designing algorithms that can utilise
fine-grained data locality in energy-efficient chips \cite{Dally11}. In
practice, although the exact block size at each level of the memory hierarchy is
architecture-dependent (e.g. register size, cache line size), obtaining a single upper
bound for all the block sizes (e.g. register size, cache line size and page size)
is easy. For example, the page size obtained from the operating system is such
an upper bound.
        
Figure \ref{fig:dynamicVEB} illustrates the new dynamic vEB layout based on the
relaxed cache oblivious model. Let $L$ be the coarsest level of detail such that
every recursive subtree contains at most $UB$  nodes. The tree is recursively
partitioned into level of detail $L$ where each subtree represented by a
triangle in Figure \ref{fig:dynamicVEB},  is stored in a contiguous memory block
of size $UB$. Unlike the conventional vEB, the subtrees at level of detail $L$
are linked to each other using pointer, namely each subtree at level of detail
$k > L$ is not stored in a contiguous block of memory.  Intuitively, since $UB$
is an upper bound on the unknown memory block size $B$, storing a subtree at
level of detail $k > L$ in a contiguous memory block of size greater than $UB$,
does not reduce the number of memory transfer. For example, in Figure
\ref{fig:dynamicVEB}, a travel from a subtree $A$ at level of detail $L$, which
is stored in a contiguous memory block of size $UB$, to its child subtree $B$ at
the same level of detail will result in at least two memory transfers: one for
$A$ and one for $B$. Therefore, it is unnecessary to store both $A$ and $B$ in a
contiguous memory block of size $2UB$. As a result, the cost of any search in
the new dynamic vEB layout is intuitively the same as that of the conventional
vEB layout (cf. Lemma \ref{lem:dynamic_vEB_search}) while the former supports
highly concurrent update operations because it utilises pointers.

Let $\Delta$Node be a subtree at level of detail $L$, which is stored in a
contiguous memory block of size $UB$ and is represented by a triangle in Figure
\ref{fig:dynamicVEB}.

\begin{figure}[!t] 
\centering \scalebox{0.8}{\input{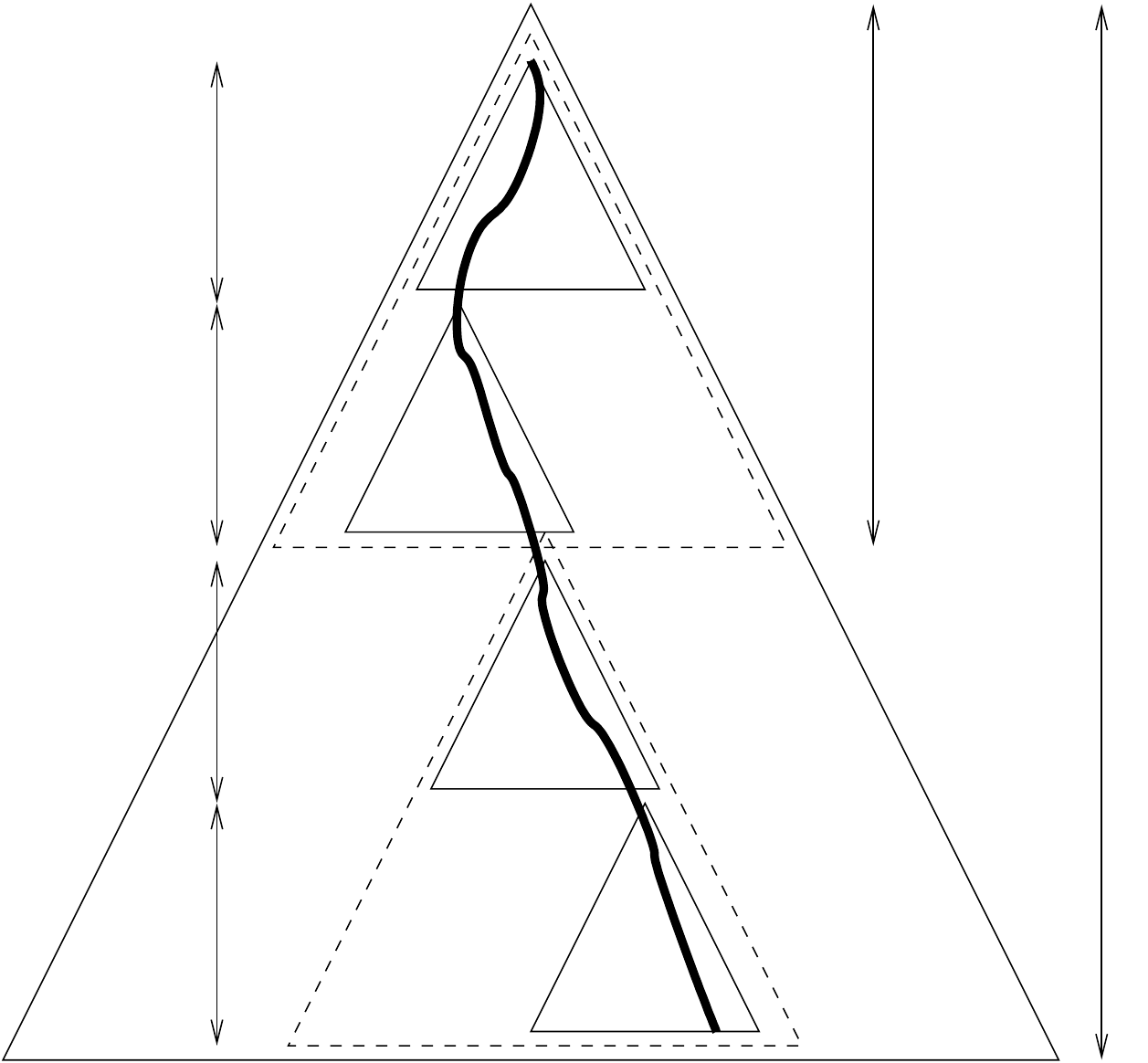_t}}
\caption{Search illustration}\label{fig:search_complexity}
\end{figure}

\begin{lemma}
A search in a complete binary tree with the new dynamic vEB layout needs 
$O(log_B N)$ memory transfers, where $N$ and $B$ is the tree size and 
the {\em unknown} memory block size in the ideal cache model
\cite{Frigo:1999:CA:795665.796479}, respectively.
\label{lem:search_mem} \label{lem:dynamic_vEB_search}
\end{lemma}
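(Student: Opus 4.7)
The plan is to adapt the classical cache-oblivious analysis of the static vEB layout, exploiting the fact that at sufficiently fine levels of detail the dynamic and static vEB layouts are indistinguishable. The key observation is that the memory-transfer analysis only needs to look at recursive subtrees whose size is around $B$, and because $B \leq UB$ such subtrees live entirely inside single $\Delta$Nodes.

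Concretely, I would first pick the coarsest level of detail $j^{*}$ at which every recursive subtree has height at most $\lfloor \log(B+1) \rfloor$, so that its size is at most $B$ while its parent level has size more than $B$. Since $B \leq UB$, one has $2^{j^{*}} \leq \log(B{+}1) \leq \log(UB{+}1) = H = 2^{L}$, hence $j^{*} \leq L$. By the construction of the dynamic vEB layout, any recursive subtree at a level of detail $j^{*} \leq L$ is contained in a single $\Delta$Node and is therefore stored contiguously (it is laid out exactly according to the static vEB rule inside that $\Delta$Node). Thus each level-$j^{*}$ subtree occupies a contiguous region of at most $B$ nodes, so regardless of alignment it spans at most $2$ memory blocks of size $B$ and can be read with $O(1)$ block transfers.

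Second, I would count how many level-$j^{*}$ subtrees a root-to-leaf search path can touch. A search traverses a path of length $O(\log N)$ in the underlying binary tree, and each level-$j^{*}$ subtree contributes $2^{j^{*}} = \Theta(\log B)$ edges to any such path. Therefore the path crosses $O(\log N / \log B) = O(\log_{B} N)$ level-$j^{*}$ subtrees, and multiplying by the $O(1)$ transfers per subtree from the previous step gives the claimed bound.

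The main subtlety I expect is accounting for the pointer jumps between $\Delta$Nodes, which were absent in the static analysis. However, a pointer is followed only when the search has just finished examining a leaf of one $\Delta$Node and must load the root block of the next $\Delta$Node; because the target $\Delta$Node is a completely separate contiguous region, this transfer is already what the argument above charges to the first level-$j^{*}$ subtree entered in the new $\Delta$Node, so no double-counting occurs and no additional term appears. A secondary minor issue is the boundary case where $B$ is so small that $j^{*} = 0$ (single nodes); in that degenerate regime $\log_{B} N$ is $\Omega(\log N)$ and the bound holds trivially since every node access is charged one transfer. Putting the three steps together yields $O(\log_{B} N)$ memory transfers, matching the static vEB bound.
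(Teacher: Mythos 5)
Your proposal is correct and follows essentially the same route as the paper's proof: choose the coarsest level of detail whose subtrees have at most $B$ nodes, note that since $B \leq UB$ these subtrees lie inside single $\Delta$Nodes and hence occupy at most two memory blocks each, and then bound the number of such subtrees on a root-to-leaf path by $O(\log N/\log B) = O(\log_B N)$. The extra remarks on pointer jumps between $\Delta$Nodes and the degenerate case $j^{*}=0$ are welcome refinements but do not change the argument.
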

\begin{proof} (Sketch)
Figure \ref{fig:search_complexity} illustrates the proof.  
Let $k$ be the coarsest level of detail such that every recursive subtree 
contains at most $B$ nodes. Since $B \leq UB$, $k \leq L$, where $L$ is 
the coarsest level of detail at which every recursive subtree 
contains at most $UB$ nodes. That means there are at most $2^{L-k}$ subtrees
along to the search path in a $\Delta$Node and no subtree of depth $2^k$ is split 
due to the boundary of $\Delta$Nodes. Namely, triangles of height $2^k$ fit
within a dash triangle of height $2^L$ in Figure \ref{fig:search_complexity}. 

Due to the property of the new dynamic vEB layout that at any level of detail 
$i \leq L$, a recursive subtree of depth $2^i$ is stored in a contiguous block 
of memory, each subtree of depth $2^k$ {\em within} a $\Delta$Node is stored in
at most 2 memory blocks of size $B$ (depending on the starting location of
the subtree in memory). Since every subtree of depth $2^k$ fits in a
$\Delta$Node (i.e.
no subtree is stored across two $\Delta$Nodes), every subtree of depth $2^k$ is 
stored in at most 2 memory blocks of size $B$.

Since the tree has height $T$, $\lceil T / 2^k \rceil$ subtrees of depth $2^k$ 
are traversed in a search and thereby at most  $2  \lceil T / 2^k \rceil$ memory 
blocks are transferred. 

Since a subtree of height $2^{k+1}$ contains more than $B$ nodes, 
$2^{k+1} \geq log_2 (B + 1)$, or $2^{k} \geq \frac{1}{2} log_2 (B+ 1)$. 

We have $2^{T-1} \leq N \leq 2^T$ since the tree is a {\em complete} binary tree. 
This implies $ log_2 N \leq T \leq log_2 N +1$.  

Therefore, $2  \lceil T / 2^k \rceil \leq 4 \lceil \frac{log_2 N + 1}{log_2 (B + 1)} \rceil 
= 4 \lceil log_{B+1} N + log_{B+1} 2\rceil = O(log_B N)$, where $N \geq 2$.
\end{proof}

\section{$\Delta$Tree Overview}\label{sec:overview}

Figure \ref{fig:treeuniverse} illustrates a $\Delta$Tree named $U$. $\Delta$Tree
$U$ uses our new dynamic vEB layout presented in Section \ref{sec:dynamic_veb}.
The $\Delta$Tree consists of $|U|$ $\Delta$Nodes of fixed size $UB$ each of
which contains a \textit{leaf-oriented} binary search tree (BST) $T_i, i=1,
\dots,|U|$. $\Delta$Node's internal nodes are put together in cache-oblivious
fashion using the vEB layout.

\begin{figure}[!t] \centering \includegraphics[width=0.8\columnwidth ]{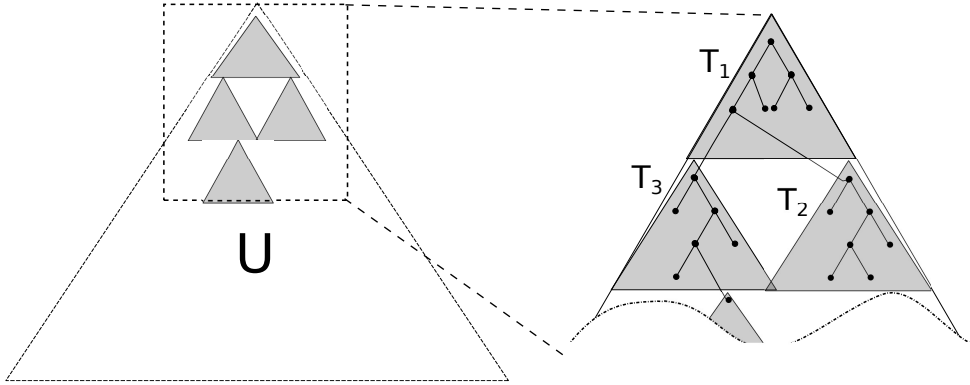}
\caption{Depiction of $\Delta$Tree universe $U$}
\label{fig:treeuniverse}
\end{figure}

The $\Delta$Tree $U$ acts as the dictionary of abstract data types. It maintains
a set of values which are members of an ordered universe \cite{EllenFRB10}. It offers the
following operations: \textsc{insertNode($v, U$)}, which adds value $v$ to the
set $U$, \textsc{deleteNode($v, U$)} for removing a value $v$ from the set, and
\textsc{searchNode($v, U$)}, which determines whether value $v$ exists in the
set. We may use the term \textit{update} operation for either insert or delete
operation. We assume that duplicate values are not allowed inside the set and a
special value, say $0$, is reserved as an indicator of an \textsc{Empty}
value.

Operation \textsc{searchNode($v,U$)} is going to walk over the $\Delta$Tree to
find whether the value $v$ exists in $U$. This particular operation is
guaranteed to be wait-free, and returning \textbf{true} whenever $v$ has been
found, or \textbf{false} otherwise. The \textsc{insertNode($v, U$)} inserts a
value $v$ at the leaf of $\Delta$Tree, provided $v$ does not yet exist in the
tree. Following the nature of a leaf-oriented tree, a successful insert operation 
will replace a leaf with a subtree of three nodes \cite{EllenFRB10} (cf. Figure
\ref{fig:treetransform}a).
The \textsc{deleteNode($v, U$)} simply just {\em marks} the leaf that contains
the value $v$ as deleted, instead of physically removing the leaf or changing
its parent pointer as proposed in \cite{EllenFRB10}.

Apart from the basic operations, three maintenance $\Delta$Tree operations are
invoked in certain cases of inserting and deleting a node from the tree.
Operation \textsc{rebalance($T_v.root$)} is responsible for rebalancing a
$\Delta$Node after an insertion.
Figure \ref{fig:treetransform}a illustrates the rebalance operation. Whenever a
new node $v$ is to be inserted at the last level $H$ of $\Delta$Node $T_1$, the
$\Delta$Node is rebalanced to a complete BST by setting a new depth for all
leaves (e.g. $a,v,x,z$ in Figure \ref{fig:treetransform}a) to $\log N + 1$,
where $N$ is the number of leaves. In Figure \ref{fig:treetransform}a, we can
see that after the rebalance operation, tree $T_1$ becomes more balanced and its
height is reduced from 4 into 3.

We also define the \textsc{expand($v$)} operation, that will be responsible for
creating new $\Delta$Node and connecting it to the parent of the leaf node $v$.
Expand will be triggered only if after \textsc{insertNode($v, U$)}, the leaf $v$
will be at the last level of a $\Delta$Node and rebalancing will no longer
reduce the current height of the subtree $T_i$ stored in the $\Delta$Node. 
For example if the expanding is taking place at a node $v$ located at the bottom
level of the $\Delta$Node (Figure \ref{fig:treetransform}b), or $depth(v) = H$,
then a new $\Delta$Node ($T_2$ for example) will be referred by the parent of
node $v$, immediately after value of node $v$ is copied to $T_2.root$ node.
Namely, the parent of $v$ swaps one of its pointer that previously points to
$v$, into the root of the newly created $\Delta$Node, $T_2.root$.

The \textsc{merge($T_v.root$)} is for merging $T_v$ with its sibling 
after a node deletion. For example, in Figure \ref{fig:treetransform}c $T_2$ is
merged into $T_3$. Then
the pointer of $T_3$'s grandparent that previously points
to the parent of both $T_3$ and $T_2$ is replaced to point to $T_3$.
The operations are invoked provided that a particular $\Delta$Node where 
the deletion takes place, is filled less than half of its capacity and all
values of that $\Delta$Node and its siblings can be fitted into a $\Delta$Node. 

To minimise block transfers required during tree traversal, the height of the
tree should be kept minimal. These auxiliary operations are the unique
feature of $\Delta$Tree in the effort of maintaining a small height.

These \textsc{insertNode} and \textsc{deleteNode} operations are non-blocking 
to other  \textsc{searchNode, insertNode} and \textsc{deleteNode} operations.
Both of the operations are using single word CAS (Compare and Swap)
and "leaf-checking" to achieve that. Section \ref{sec:implementation}
will explain more about these update operations.

As a countermeasure against unnecessary waiting for concurrent maintenance
operations, a buffer array is provided in each of the $\Delta$Nodes. This buffer
has a length that is equal to the number of maximum concurrent threads. As an
illustration of how it works, consider two concurrent
operations \textsc{insertNode($v,U$)} are operating inside the same
$\Delta$Node.
Both are successful and have determined that expanding or rebalancing are
necessary. Instead of rebalancing twice, those two threads will then compete to obtain the lock on that $\Delta$Node. The losing
thread will just append $v$ into the buffer and then exits. The winning thread, 
which has successfully acquired the lock, will do rebalancing or expanding using
all the leaves and the buffer of that $\Delta$Node. The same process happens for
concurrent delete, or the mix of concurrent insert and delete.

Despite \textsc{insertNode} and \textsc{deleteNode} are non-blocking, they still
need to wait at the tip of a $\Delta$Node whenever either of the maintenance
operations, \textsc{rebalance} and \textsc{merge} is currently operating within
that $\Delta$Node. We employ TAS (Test and Set) using $\Delta$Node lock to make
sure that no basic update operations will interfere with the maintenance
operations. Note that the previous description has shown that \textsc{rebalance}
and \textsc{merge} execution are actually sequential within a $\Delta$Node, so
reducing the invocations of those operations is crucial to deliver a scalable
performance of the update operations. To do this, we have set a density
threshold that acts as limiting factor, bringing a good amortised cost of
insertion and deletion within a $\Delta$Node, and subsequently for the whole
$\Delta$Tree. 
The proof for the amortised cost are given in Section \ref{sec:implementation}
of this paper.

Concerning the $\textsc{expand}$ operation, an amount of memory for a new
$\Delta$Node needs to be allocated during runtime. Since we kept the size of a
$\Delta$Node equal to the page size, memory allocation routine for new
$\Delta$Nodes does not require plenty of CPU cycles.

\begin{figure}[!t] \centering \includegraphics[width=0.8\columnwidth]{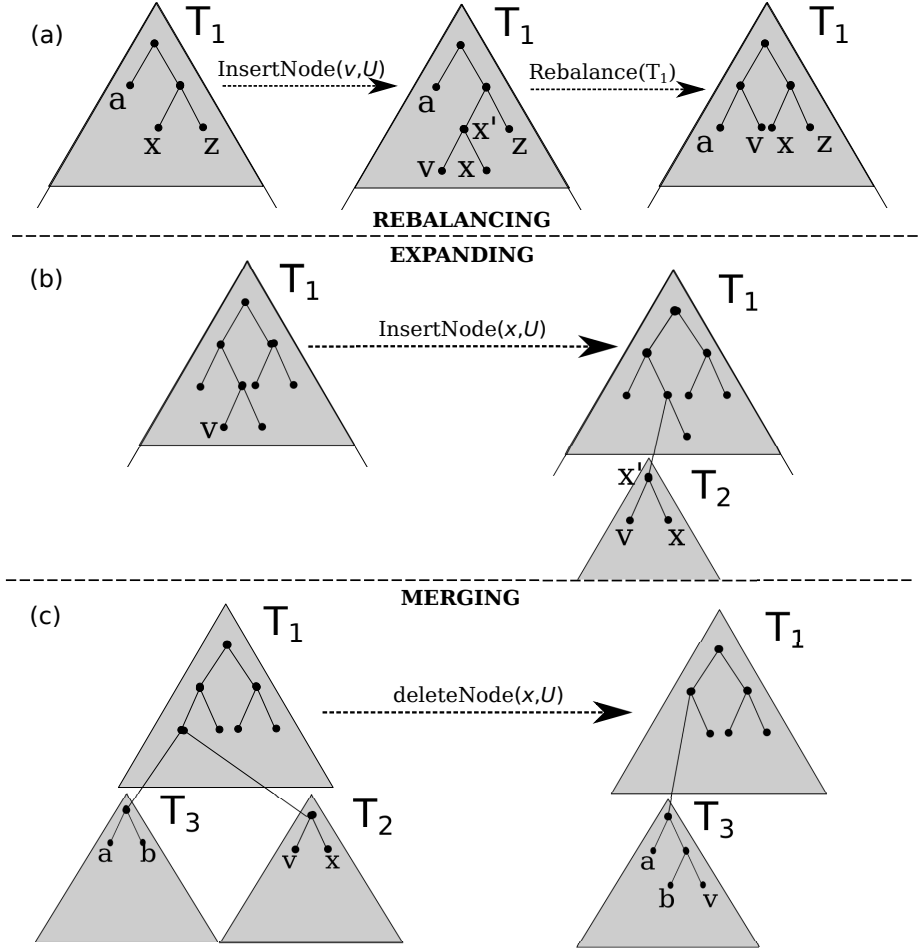}
\caption{(a)\textit{Rebalancing}, (b)\textit{Expanding}, and (c)\textit{Merging} operations on $\Delta$Tree}
\label{fig:treetransform}
\end{figure}


\section{Detailed Implementation}\label{sec:implementation}


\subsection{Function specifications}
The \textsc{searchNode}($v, U$) function will return \textbf{true} whenever value $v$
has been inserted in one of the $\Delta$Tree ($U$) leaf node and that node's $mark$
property is currently set to \textbf{false}. Or if $v$ is placed on one of the 
$\Delta$Node's buffer located at the lowest level of $U$. It returns \textbf{false} whenever
it couldn't find a leaf node with $value=v$, or $v$ couldn't be found in 
the last level $T_{tid}.rootbuffer$. 

\textsc{insertNode}($v, U$) will insert value $v$ and returns \textbf{true} 
if there is no leaf node with $value=v$, or
there is a leaf node $x$ which satisfy $x.value=v$ but with $x.mark=\textbf{true}$, 
or $v$ is not found in the last $T_{tid}$'s $rootbuffer$.
In the other hand, \textsc{insertNode} returns \textbf{false} if there is a leaf node with 
$value=v$ and $mark=\textbf{false}$, or $v$ is found in $T_{tid}.rootbuffer$.

For \textsc{deleteNode}($v, U$), a value of \textbf{true} is returned if there
is a leaf node with $value=v$ and $mark=\textbf{false}$, or $v$ is found in the
last $T_{tid}$'s $rootbuffer$. The value $v$ will be then deleted.
In the other hand, \textsc{deleteNode} returns \textbf{false} if there is a leaf
node with $value=v$ and $mark=\textbf{true}$, or $v$ is not found in
$T_{tid}.rootbuffer$.

\subsection{Synchronisation calls}
For synchronisation between update and maintenance operations, we define
\textsc{flagup}($opcount$) that is doing atomic increment of $opcount$ and also
a function that do atomic decrement of $opcount$ as
\textsc{flagdown}($opcount$).

\begin{figure}[!t] \centering 
\begin{algorithmic}[1] 
\small
\Function{waitandcheck}{$lock, opcount$}
    \Do
        \State \textsc{spinwait}($lock$) \label{lst:line:spinwait}
        \State \textsc{flagup}($opcount$) 
        \State $repeat \gets \textbf{false}$
        \If{$lock=true$}
            \State \textsc{flagdown}($opcount$) 
            \State $repeat \gets \textbf{true}$
        \EndIf
    \doWhile{$repeat=\textsc{true}$}
\EndFunction
\end{algorithmic}
\caption{Wait and check algorithm}
\label{lst:waitandcheck}
\end{figure}

Also there is \textsc{spinwait}($lock$) that basically instruct a thread to
spin waiting while $lock$ value is \textbf{true}. Only \textsc{Merge} and 
\textsc{Rebalance} that will have to privilege to set $T_x.lock$ as \textbf{true}.
Lastly there is
\textsc{waitandcheck}($lock, opcount$) function (Figure
\ref{lst:waitandcheck}) that is preventing updates in getting mixed-up with maintenance
operations. The mechanism of \textsc{waitandcheck}($lock, opcount$)
will instruct a thread to wait at the tip of a current $\Delta$Node
whenever another thread has obtained a lock on that $\Delta$Node
for the purpose of doing any maintenance operations.

\begin{figure}[!t] \centering \begin{algorithmic}[1]
\small
\Start{\textbf{node} $n$}
	\START{member fields:}
\State $tid \in \mathbb{N}$, if $> 0$ indicates the node is root of a \par
\hskip\algorithmicindent $\Delta$Node with an id of $tid$ ($T_{tid}$) 

\State $value \in
\mathbb{N}$, the node value, default is \textbf{empty} 

\State $mark \in
\{true,false\}$, a value of $true$ indicates a logically \par \hskip\algorithmicindent deleted
node \State $left, right \in \mathbb{N}$, left / right child pointers \State $isleaf
\in {true,false}$, indicates whether the \par \hskip\algorithmicindent node is a
leaf of a $\Delta$Node, default is $true$ 	\label{lst:line:leafdefault}

\END \End

\Statex
\Start{\textbf{$\Delta$Node} $T$}
	\START{member fields:}
\State $nodes$, a group of  $(|T| \times 2)$ amount of\par
\hskip\algorithmicindent pre-allocated node $n$.
\State $rootbuffer$, an array of value with a length \par
\hskip\algorithmicindent  of the current number of threads \State
$mirrorbuffer$, an array of value with a length \par \hskip\algorithmicindent 
of the current number of threads \State $lock$, indicates whether a $\Delta$Node is
locked \State $flag$, semaphore for active update operations \State $root$,
pointer the root node of the $\Delta$Node \State $mirror$, pointer to root node of
the $\Delta$Node's \par \hskip\algorithmicindent mirror \END \End

\Statex
\Start{\textbf{universe} $U$}
	\START{member fields:}
\State $root$, pointer to the $root$ of the topmost $\Delta$Node \par
\hskip\algorithmicindent ($T_1.root$) \END \End
\end{algorithmic}
\caption{Cache friendly binary search tree structure} \label{lst:datastruct}
\end{figure}

\begin{figure}[!t] \centering \begin{algorithmic}[1]
\small
\Function{searchNode}{$v, U$}
\State $lastnode, p \gets U.root$
\While{$p \neq$ not end of tree $\And p.isleaf \neq$ TRUE}	\label{lst:line:searchifleaf}
    \State $lastnode \gets p$	\label{lst:line:lasnode-p}
        \If{$p.value < v$}		\label{lst:line:searchless}
            \State $p \gets p.left$
        \Else					\label{lst:line:searchelse}
            \State $p \gets p.right$
        \EndIf
\EndWhile
\If{$lastnode.value = v$} \label{lst:line:linsearch3}
	\If{$lastnode.mark =$ FALSE}				\label{lst:line:linsearch1}
		\State \Return TRUE 
	\Else
		\State \Return FALSE
	\EndIf
\Else \State Search ($T_{tid}.rootbuffer$) for $v$	\label{lst:line:searchbuffer}
	\If{$v$ is found}							\label{lst:line:linsearch2}
\State\Return TRUE \Else \State\Return FALSE \EndIf \EndIf \EndFunction
\end{algorithmic}
\caption{A wait-free searching algorithm of $\Delta$Tree}\label{lst:nodeSearch}
\end{figure}

\begin{figure}
\centering
\begin{algorithmic}[1]
\small

\Function{insertNode}{$v,U$} \Comment{\parbox[t]{.5\linewidth}{Inserting an new item $v$ into $\Delta$Tree $U$}}
\State $t \gets U.root$
\State \Return \textsc{insertHelper}($v,t$)
\EndFunction
\State

\Function{deleteNode}{$v,T$} \Comment{\parbox[t]{.5\linewidth}{Deleting an item $v$ from $\Delta$Tree $U$}}
\State $t \gets U.root$
\State \Return \textsc{deleteHelper}($v,t$)
\EndFunction
\State

\Function{deleteHelper}{$v, node$}						
\State $success \gets$ TRUE
\If{Entering new $\Delta$Node $T_x$}             \Comment{\parbox[t]{.5\linewidth}{Observed by examining $x \gets node.tid$ value change}}    			                                                 
   \State  $T'_x \gets$ \textsc{getParent$\Delta$Node}($T_x$)
   \State  \textsc{flagdown}($T'_x.opcount$) 	\Comment{\parbox[t]{.5\linewidth}{Flagging down operation count on the previous/parent triangle}}			
   \State  \textsc{waitandcheck}($T_x.lock$, $T_x.opcount$)
   \State  \textsc{flagup}($T_x.opcount$)
\EndIf

\If{($node.isleaf =$ TRUE)} 			\Comment{Are we at leaf?}
        \If{$node.value = v$}
            \If{\textsc{CAS}($node.mark$, FALSE, TRUE) != FALSE)}              	\label{lst:line:markdel}       \Comment{Mark it delete}
                	\State $success \gets$ FALSE                                           		 \Comment{Unable to mark, already deleted}
            \Else	
             	\If{($node.left.value$=\textbf{empty}\&$node.right.value$=\textbf{empty})} 	\label{lst:line:markdel-check} 
			\State $T_x.bcount \gets T_x.bcount - 1 $	
                		\State \textsc{mergeNode}(\textsc{parentOf}($T_x)) \gets$ TRUE      \Comment{Delete succeed, invoke merging}                 
		\Else		 
			\State \textsc{deleteHelper}($v, node$)	 		\Comment{\parbox[t]{.5\linewidth}{Not leaf, re-try delete from $node$}}
		\EndIf
            \EndIf
        \Else												
        	\State Search ($T_{x}.rootbuffer$) for $v$
	   	\If{$v$ is found in $T_{x}.rootbuffer.idx$}
                 		\State $T_{x}.rootbuffer.idx \gets \textbf{empty}$	\label{lst:line:bufdel} 
				\State $T_x.bcount\gets T_x.bcount - 1$
        			\State $T_x.countnode\gets T_x.countnode - 1$
            	\Else	
			\State \textsc{flagdown}($T_x.opcount$)
            		\State $success \gets$ FALSE					\Comment{\parbox[t]{.5\linewidth}{Value not found}}
	    	\EndIf                                             
    	\EndIf
	\State \textsc{flagdown}($T_x.opcount$)
\Else \If{$v < node.value$}
        \State \textsc{deleteHelper}($v, node.left$)
\Else
        \State \textsc{deleteHelper}($v, node.right$)
\EndIf
\EndIf
\State \Return $success$
\EndFunction

\algstore{myalg}
\end{algorithmic}
\end{figure}

\begin{figure}
\centering
\begin{algorithmic} [1]                   
\algrestore{myalg}
\small

\Function{insertHelper}{$v, node$}
\State $success \gets$ TRUE
\If{Entering new $\Delta$Node $T_x$}       				\Comment{\parbox[t]{.5\linewidth}{Observed by examining $x \gets node.tid$ change}}                                     	                      
   \State  $T'_x \gets$ \textsc{getParent$\Delta$Node}($T_x$)	 
   \State  \textsc{flagdown}($T'_x.opcount$) 				\Comment{\parbox[t]{.5\linewidth}{Flagging down operation count on the previous/parent triangle}}
   \State  \textsc{waitandcheck}($T_x.lock$, $T_x.opcount$)
	\State  \textsc{flagup}($T_x.opcount$)
\EndIf

\If{$node.left \And node.right$}                          \Comment{\parbox[t]{.5\linewidth}{At the lowest level of a $\Delta$Tree?}}                          
        \If{$v < node.value$}                                                                              
            \If{($node.isleaf =$ TRUE)}                                 
                \If{CAS($node.left.value$, \textbf{empty}, $v$) = \textbf{empty}}       \label{lst:line:ins-lp1} 	\label{lst:line:growins1} 
                	\State $node.right.value \gets node.value$                                                       
                    \State $node.right.mark \gets node.mark$
                    \State $node.isleaf \gets$ FALSE	
                    \State  \textsc{flagdown}($T_x.opcount$)
                \Else
                    \State \textsc{insertHelper}($v$, $node$)     \Comment{\parbox[t]{.5\linewidth}{Else try again to insert starting with the same node}}                     
                \EndIf
            \Else
                \State \textsc{insertHelper}($v$, $node.left$)             \Comment{Not a leaf, proceed further to find the leaf}        
            \EndIf
        \ElsIf {$v > node.value$}                                                                      
            \If{($node.isleaf =$ TRUE)}                                  
                \If{CAS($node.left.value$, $\textbf{empty}$, $v$) =
                \textbf{empty}}  \label{lst:line:ins-lp2}  \label{lst:line:growins2}
                	\State $node.right.value \gets v$                                                        
                    \State $node.left.mark \gets node.mark$
                    \State \textsc{atomic} $\{$					\label{lst:line:atomic-isleaf}
                    	$node.value \gets v$
                    	\State $node.isleaf \gets$ FALSE
                    $\}$	
                  \State  \textsc{flagdown}($T_x.opcount$)
                \Else
                    \State \textsc{insertHelper}($v$, $node$)                      \Comment{\parbox[t]{.5\linewidth}{Else try again to insert starting with the same node}}     
                \EndIf
            \Else
                \State \textsc{insertHelper}($v$, $node.right$)                	   \Comment{Not a leaf, proceed further to find the leaf}
            \EndIf
        \ElsIf {$v = node.value$}                                                                                            
            \If{($node.isleaf =$ TRUE)}                                   
                \If{$node.mark =$ FALSE}				
                    \State $success \gets$ FALSE            				\Comment{\parbox[t]{.5\linewidth}{Duplicate Found}}
                   \State  \textsc{flagdown}($T_x.opcount$)                                   
                \Else
                	   \State Goto \ref{lst:line:growins2} 	
                \EndIf
            \Else
                \State \textsc{insertHelper}($v, node.right$)                    	 \Comment{Not a leaf, proceed further to find the leaf}      
            \EndIf
        \EndIf
\Else			
	\If{$val = node.value$}                                                                            
            \If{$node.mark = 1$}
                \State $success \gets FALSE$                                               
            \EndIf
        \Else 					\Comment{\parbox[t]{.5\linewidth}{All's failed, need to rebalance or expand the triangle $T_x$}}
\algstore{myalg}
\end{algorithmic}
\end{figure}

\begin{figure}[!t]
\centering
\begin{algorithmic} [1]                   
\algrestore{myalg}
\small
		\If{$v$ already in $T_x.rootbuffer$}
			$success \gets FALSE$   
		\Else
			\State \textit{put $v$ inside $T_x.rootbuffer$} \label{lst:line:insertbuffer}
			\State $T_x.bcount\gets T_x.bcount + 1$
        			\State $T_x.countnode\gets T_x.countnode + 1$
		\EndIf            
          	\If{\textsc{TAS}($T_x.lock$)}                                            			\Comment{\parbox[t]{.5\linewidth}{All threads try to lock $T_x$}}

                		\State \textsc{flagdown}($T_x.opcount$)                                      \Comment{\parbox[t]{.5\linewidth}{Make sure no flag is still raised}}
                		\State \textsc{spinwait}($T_x.opcount$)                                           \Comment{\parbox[t]{.5\linewidth}{Now wait all insert/delete operations to finish}}
                
                		\State $total \gets T_x.countnode + T_x.bcount$
                		
                		\If {$total * 4 \geqslant U.maxnode + 1$}                             \Comment{Expanding needed, $density > 0.5$}
                    		\State \ldots \textit{Create(a new triangle) AND attach it on the to the parent of $node$} \ldots                   
                		\Else
                    		\If{$T_x$ don't have triangle child(s)}
                        			\State $T_x.mirror \gets$ \textsc{rebalance}($T_x.root$, $T_x.rootbuffer$)
                        			\State \textsc{switchtree}($T_x.root$, $T_x.mirror$)
                        			\State $T_x.b_count \gets 0$
                    		\Else
                        			\If{$T_x.bcount > 0$}
                            			\State Fill $childA$ with all value in $T_x.rootbuffer$ 	\Comment{\parbox[t]{.2\linewidth}{Do non-blocking insert here}}
                            			\State $T_x.bcount \gets 0$
                        			\EndIf                             
				\EndIf
                		\EndIf	
			\State \textsc{spinunlock}($T_x.lock$)   
		\Else
			\State \textsc{flagdown}($T_x.opcount$) 
		\EndIf    
	\EndIf                                  
\EndIf
\State \Return $success$
\EndFunction
\end{algorithmic}
\caption{Update algorithms and their helpers functions}\label{lst:pseudo-ops}
\end{figure}

\begin{figure}[!t]
\begin{algorithmic}[1]
\small
\Procedure{balanceTree}{$T$}
\State Array $temp[|H|]\gets$ Traverse($T$) \Comment{\parbox[t]{.5\linewidth}{Traverse all the non-empty node into $temp$ array}}
\State RePopulate($T, temp$) \Comment{\parbox[t]{.5\linewidth}{Re-populate the tree $T$ with all the value from $temp$ recursively. RePopulate will resulting a balanced tree $T$}}
\EndProcedure

\Procedure{mergeTree}{$root$}
\State $parent \gets$ \textsc{parentOf}($root$)

\If{$parent.left = root$}
	\State $sibling \gets parent.right$  
\Else
	\State $sibling \gets parent.left$
\EndIf

\State $T_r \gets$ \textsc{triangleOf}($root$)			\Comment{\parbox[t]{.5\linewidth}{Get the $T$ of $root$ node}}
\State $T_p \gets$ \textsc{triangleOf}($parent$)			\Comment{\parbox[t]{.5\linewidth}{Get the $T$ of $parent$}}
\State $T_s \gets$ \textsc{triangleOf}($sibling$)			\Comment{\parbox[t]{.5\linewidth}{Get the $T$ of $sibling$}}

\If{\textsc{spintrylock}($T_r.lock$)}                \Comment{\parbox[t]{.5\linewidth}{Try to lock the current triangle}}
      	\State \textsc{spinlock}($T_s.lock, T_p.lock$)	 \Comment{\parbox[t]{.5\linewidth}{lock the sibling triangles}}
	\State \textsc{flagdown}($T_r.opcount$) 
	\State \textsc{spinwait}($T_r.opcount, T_s.opcount, T_p.opcount$)	 \Comment{\parbox[t]{.3\linewidth}{Wait for all insert/delete operations to finish}}

	\State $total \gets T_s.nodecount + T_s.bcount + T_r.nodecount + T_r.bcount$

	\If{($T_s \And T_r$ don't have children) $\And (T_p \geqslant U.maxnode + 1) / 2) \| T_s \geqslant U.maxnode + 1) / 2)) \And total \leqslant (U.maxnode + 1) / 2$}
		\State MERGE $T_r.root , T_r.rootbuffer, T_s.rootbuffer$ into $T.s$ 
		\If{$parent.left = root$}				\Comment{\parbox[t]{.5\linewidth}{Now re-do the pointer}}
			\State $parent.left \gets root.left$  	\Comment{\parbox[t]{.5\linewidth}{Merge Left}}
		\Else
			\State $parent.right \gets root.right$	\Comment{\parbox[t]{.5\linewidth}{Merge Right}}
		\EndIf
	\EndIf
   	\State \textsc{spinunlock}($T_r.lock, T_s.lock, T_p.lock$)
\Else
	\State \textsc{flagdown}($T_r.opcount$) 
\EndIf

\EndProcedure
\end{algorithmic}
\caption{Merge and Balance algorithm}\label{mergeTree}
\end{figure}

\subsection{Wait-free and Linearisability of search}

\begin{lemma}
$\Delta$Tree search operation is wait-free.
\end{lemma}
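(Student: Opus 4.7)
The plan is to verify wait-freedom of \textsc{searchNode} in two parts: (i)~show the code contains no blocking synchronization and no retry loop, and (ii)~show that its single descent loop terminates in finitely many iterations despite concurrent activity. Together these yield that every invocation completes in finitely many of its own steps, which is the definition of wait-freedom.

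For~(i), I would simply inspect Figure~\ref{lst:nodeSearch}: the routine never invokes \textsc{CAS}, \textsc{TAS}, \textsc{spinwait}, \textsc{waitandcheck}, \textsc{flagup}, or \textsc{flagdown}, never acquires a $\Delta$Node lock, and contains none of the ``try-again'' back-edges used by \textsc{insertHelper} and \textsc{deleteHelper}. The postamble is an $O(1)$ test followed by a linear scan of $T_{tid}.rootbuffer$, whose length equals the fixed maximum number of concurrent threads and is therefore a constant. Thus every step outside the descent loop is bounded and cannot be stalled by another thread.

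For~(ii), each iteration of the \textsc{while} loop performs $O(1)$ work and strictly advances $p$ one level deeper: it reads $p.isleaf$ and $p.value$, then sets $p \gets p.left$ or $p \gets p.right$. Because every $\Delta$Node pre-allocates its internal nodes (the $nodes$ field in Figure~\ref{lst:datastruct}), intra-$\Delta$Node child pointers are always valid; once an inter-$\Delta$Node pointer is installed by \textsc{expand} it is never reset; and \textsc{merge} rewrites only the parent-side pointer, so a descent already inside the merged subtree still reaches a leaf. The main obstacle is to bound the loop length under concurrent extensions of the path below $p$. I would handle this by observing that at the instant the search begins the tree has some finite height, and each concurrent update can extend any one root-to-leaf path by only $O(H)$ additional levels (either by flipping an $isleaf$ flag on a pre-allocated node, or by an \textsc{expand} attaching one fresh $\Delta$Node of height $H$). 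Combined with the strict one-level-per-iteration progress, this gives termination in finitely many of the thread's own steps, completing the proof.
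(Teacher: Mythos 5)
Your argument is correct and follows essentially the same route as the paper's own proof sketch: inspect \textsc{searchNode} to see it contains no locks, retries, or waiting constructs (the buffer scan being a bounded left-to-right pass), and then argue that concurrent \textsc{insertNode}/\textsc{deleteNode} and the maintenance operations cannot impede the descent because the $\Delta$Node structure is pre-allocated, deletion only marks nodes, and restructuring does not disturb the pointers a search follows. Your added explicit termination bound for the descent loop is a slightly more careful rendering of what the paper leaves implicit, but it is the same underlying argument.
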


\begin{proof}(Sketch) In the searching algorithm (cf. Figure
\ref{lst:nodeSearch}), the $\Delta$Tree will be traversed from the root node
using iterative steps. When at $root$, the value to search $v$ is compared to
$root.value$. If $v < root.value$, the left side of the tree will be traversed
by setting $root \gets root.left$ (line \ref{lst:line:searchless}), in contrary
$v \geqslant root.value$ will cause the right side of the tree to be traversed
further (line \ref{lst:line:searchelse}). The procedure will repeat until a leaf
has been found ($v.isleaf = \textbf{true}$) in line \ref{lst:line:searchifleaf}.

If the value $v$ couldn't be found and search has reached the end of 
$\Delta$Tree, a buffer search will be conducted
(line \ref{lst:line:searchbuffer}).
This search is done by simply searching the buffer array from left-to-right
to find $v$, therefore no waiting will happen in this phase.

The \textsc{deleteNode} and \textsc{insertNode} algorithms (Figure
\ref{lst:pseudo-ops}) are non-intrusive to the structure of a tree, thus they
won't interfere with an ongoing search. A \textsc{deleteNode} operation, if
succeeded, is only going to mark a node by setting a $v.mark$ variable as
\textbf{true} (line \ref{lst:line:markdel} in Figure \ref{lst:pseudo-ops}).
The $v.value$ is retained so that a search will be able to proceed further.
For \textsc{insertNode}, it can \textit{"grow"} the current leaf node as it
needs to lays down two new leaves (lines  \ref{lst:line:growins1} and
\ref{lst:line:growins2} in Figure \ref{lst:pseudo-ops}), however the operation
never changes the internal pointer structure of a $\Delta$Node, since
$\Delta$Node internal tree structure is pre-allocated beforehand, allowing a
search to keep moving forward.
As depicted in Figure \ref{fig:treetransform}(a), after an insertion of $v$
grows the node, the old node (now $x'$) still contains the same value as $x$
(assuming $v < x$), thus a search still can be directed to find either $v$ or
$x$.
 The \textsc{rebalance/Merge} operation is also not an obstacle for searching
 since its operating on a mirror $\Delta$Node.
\end{proof}

We have designed the searching to be linearisable in various concurrent
operation scenarios (Lemma \ref{lem:linear-search}). This applies as well to the
update operations.

\begin{lemma} 
For a value that resides on the leaf node of a $\Delta$Node,
\textsc{searchNode} operation (Figure \ref{lst:nodeSearch}) has the linearisation 
point to \textsc{deleteNode} at line \ref{lst:line:linsearch1} and the
linearisation point to \textsc{insertNode} at line \ref{lst:line:linsearch3}. 
For a value that stays in the buffer of a $\Delta$Node, \textsc{searchNode} 
operation has the \textit{linearisation point} at line \ref{lst:line:linsearch2}.
\label{lem:linear-search}
\end{lemma}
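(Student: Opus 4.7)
My plan is to prove linearisability by fixing an arbitrary concurrent history and exhibiting, for each \textsc{searchNode} call, a single instant inside its execution at which its return value becomes determined and is consistent with a sequential execution of the history up to that point. The argument splits into two top-level cases depending on where the target value $v$ ultimately lives: at a leaf of some $\Delta$Node, or in some $\Delta$Node's \texttt{rootbuffer}. In the leaf case, the relevant concurrent operation can be either a \textsc{deleteNode} or an \textsc{insertNode}, and the lemma prescribes different linearisation points for each, so I will treat those two subcases separately.

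For the leaf subcase against \textsc{deleteNode}, I would use the fact that delete is purely logical: its only effect on an existing leaf $x$ is the CAS at line \ref{lst:line:markdel} setting $x.mark$ from \textbf{false} to \textbf{true}, leaving $x.value$, pointers, and $isleaf$ untouched. Hence once the traversal in Figure \ref{lst:nodeSearch} reaches the leaf carrying value $v$, its return value is a pure function of $lastnode.mark$ as read at line \ref{lst:line:linsearch1}; placing the linearisation point there gives the required equivalence, since the read either precedes or follows the delete's CAS and in either case the observed boolean agrees with the delete's apparent order. For the leaf subcase against \textsc{insertNode}, I would observe that a successful insert grows a leaf by CAS-writing a child slot (lines \ref{lst:line:growins1}, \ref{lst:line:growins2}) and then atomically updating $value$ and $isleaf$ in the block at line \ref{lst:line:atomic-isleaf}; the pre-allocated internal pointer skeleton of the $\Delta$Node is never altered, so the search's traversal remains well-defined throughout. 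The observable that decides whether $v$ is found at a leaf is the comparison $lastnode.value = v$ at line \ref{lst:line:linsearch3}, which I would pick as the linearisation point against insert.

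For the buffer case, both \textsc{insertNode} (line \ref{lst:line:insertbuffer}) and \textsc{deleteNode} (line \ref{lst:line:bufdel}) act on the buffer via single-word stores, while the search scans the buffer left-to-right with single-word loads at line \ref{lst:line:searchbuffer}; I would take the search's own read that first sees $v$ (or the final read of the scan, in the unsuccessful case) as the linearisation point, which is exactly line \ref{lst:line:linsearch2}, and verify by straightforward case work on relative ordering of the store and that read. The main obstacle is ruling out anomalous histories in the leaf case where an insert \emph{growing} the leaf and a delete \emph{marking} it race with the same search: here I would argue that the atomic flip of $isleaf$ at line \ref{lst:line:atomic-isleaf} ensures that the search, at the moment it exits the traversal loop, sees either the pre-growth leaf (whose $mark$ and $value$ it then inspects) or a post-growth internal node (in which case it continues and eventually reaches a new leaf); in either outcome the two linearisation points can be chosen independently for the delete and the insert, yielding a consistent total order.
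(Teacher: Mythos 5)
Your proposal is correct and follows essentially the same route as the paper's own sketch: the same case split into leaf-vs-delete (relying on the atomicity of the mark CAS at line \ref{lst:line:markdel}, with the search point at line \ref{lst:line:linsearch1}), leaf-vs-insert (relying on the pre-allocated $\Delta$Node skeleton and the atomic $value$/$isleaf$ update at line \ref{lst:line:atomic-isleaf}, with the search point at line \ref{lst:line:linsearch3}), and buffer operations as single-word stores at lines \ref{lst:line:insertbuffer}/\ref{lst:line:bufdel} against the buffer-scan read at line \ref{lst:line:linsearch2}. The only detail the paper adds that you omit is that the root/mirror switch during rebalancing must also swap $rootbuffer$ with $mirrorbuffer$ so buffered values remain visible, but that concerns maintenance operations rather than the insert/delete interactions the lemma addresses.
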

\begin{proof}(Sketch) It is trivial to demonstrate this in relation to
deletion algorithm in Figure \ref{lst:pseudo-ops} since only an atomic operation
is responsible for altering the $mark$ property of a node (line
\ref{lst:line:markdel}).
Therefore \textsc{deleteNode} has the linearisation point to \textsc{searchNode} 
at line \ref{lst:line:markdel}. 
 
For \textsc{searchNode} interaction with an insertion that grows new subtree, we
rely on the facts that: 1) a snapshot of the current node $p$ is recorded on
$lastnode$ as a first step of searching iteration (Figure \ref{lst:nodeSearch},
line \ref{lst:line:lasnode-p}); 2) $v.value$ change, if needed, is not done
until the last step of the insertion routine for insertion of $v > node.value$
and will be done in one atomic step with $node.isleaf$ change (Figure
\ref{lst:pseudo-ops}, line \ref{lst:line:atomic-isleaf}); and 3) $isleaf$
property of all internal nodes are by default \textbf{true} (Figure
\ref{lst:datastruct}, line \ref{lst:line:leafdefault}) to guarantee that values
that are inserted are always found, even when the leaf-growing (both
left-and-right) are happening concurrently. Therefore
\textsc{insertNode} has the linearisation point to \textsc{searchNode} at line 
\ref{lst:line:ins-lp1} when inserting a value $v$ smaller than the leaf node's 
value, or at line \ref{lst:line:ins-lp2} otherwise. 

A search procedure is also able to cope well with a "buffered" insert, that is
if an insert thread loses a competition in locking a $\Delta$Node for expanding
or rebalancing and had to dump its carried value inside a buffer (Figure
\ref{lst:pseudo-ops}, line \ref{lst:line:insertbuffer}). Any value inserted to
the buffer is guaranteed to be found. This is because after a leaf $lastnode$
has been located, the search is going to evaluate whether the $lastnode.value$
is equal to $v$. Failed comparison will cause the search to look further inside
a buffer ($T_x.rootbuffer$) located in a $\Delta$Node where the leaf resides
(Figure \ref{lst:nodeSearch}, line \ref{lst:line:searchbuffer}).
By assuring  that the switching of a root $\Delta$Node with its mirror includes
switching $T_x.rootbuffer$ with $T_x.mirrorbuffer$, we can show that any new
values that might be placed inside a buffer are guaranteed to be found
immediately after the completion of their respective insert procedures.
The "buffered" insert has the linearisation point to \textsc{searchNode} at
line \ref{lst:line:insertbuffer}.

Similarly, deleting a value from a buffer is as trivial as exchanging that value
inside a buffer with an \textbf{empty} value. The search operation will failed
to find that value when doing searching inside a buffer of $\Delta$Node. This
type of delete has the linearisation point to \textsc{searchNode} at the same
line it's emptying a value inside the buffer (line \ref{lst:line:bufdel}).
\end{proof}

\subsection{Non-blocking Update Operations} 

\begin{lemma}
$\Delta$Tree Insert and Delete operations are
non-blocking to each other in the absence of maintenance operations.
\end{lemma}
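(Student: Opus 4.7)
The plan is to argue that every Insert/Delete operation either completes in a finite number of its own steps or is pre-empted by another Insert/Delete that \emph{does} complete, which is precisely the non-blocking guarantee. First I would enumerate the only steps at which an update thread can possibly be forced to re-execute or wait: the CAS on \texttt{node.left.value} at lines \ref{lst:line:ins-lp1} and \ref{lst:line:ins-lp2} inside \textsc{insertHelper}, the CAS on \texttt{node.mark} at line \ref{lst:line:markdel} inside \textsc{deleteHelper}, the TAS on $T_x.lock$ taken before buffering, and the \textsc{waitandcheck} call triggered when entering a new $\Delta$Node. Under the lemma's hypothesis (no concurrent maintenance), $T_x.lock$ is never set to \textbf{true}, because Figure \ref{lst:pseudo-ops}/\ref{mergeTree} shows only \textsc{Rebalance} and \textsc{Merge} can raise it; hence \textsc{spinwait} inside \textsc{waitandcheck} returns immediately and contributes no unbounded delay.

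Next I would go through the contention cases leaf-by-leaf. For two Inserts racing at the same would-be leaf, both execute the CAS on the child's \texttt{value} field from \textbf{empty} to their own payload; semantics of CAS guarantee that exactly one succeeds and returns, while the loser re-enters \textsc{insertHelper} on the same node. The key observation is that the loser's next descent must now see \texttt{isleaf = FALSE} (set by the winner after the successful CAS) and therefore recurses one level deeper, so the recursive retries strictly make progress toward a leaf rather than looping on the same slot. For an Insert racing with a Delete on the same leaf, the two CASes target disjoint fields (\texttt{value} vs.\ \texttt{mark}), so both always complete on their first attempt; I would just check that the resulting states (a marked leaf that later gets expanded, or a freshly split pair whose parent value-field has been updated) are exactly those handled by the duplicate/marked cases of \textsc{insertHelper}. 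For two Deletes on the same leaf, only the winner of the single CAS on \texttt{mark} returns \textbf{true}; the loser returns \textbf{false} without retrying, so it trivially makes progress.

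For the buffer path (reached when neither child is present, e.g.\ at the root of a $\Delta$Node that has grown to its tip), I would argue separately. Threads reaching this path compete on the TAS of $T_x.lock$; a loser does \textsc{flagdown} and returns, i.e.\ makes progress. The winner in the absence of maintenance triggers rebalancing/expansion, but since the lemma assumes those are not running \emph{as concurrent operations of other threads}, the thread that acquires the lock does the work itself and releases. The only subtle point is the \textsc{spinwait}$(T_x.\mathit{opcount})$ executed after the TAS: it waits for in-flight updates to flag down, but each such update is already past its CAS phase and runs to completion in a bounded number of steps, so the wait is bounded.

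The main obstacle I expect is ruling out an infinite chain of failed CASes in \textsc{insertHelper}, since it is written as tail-recursion on the same node when CAS fails. I would handle this by a potential argument: after every failed CAS on \texttt{node.left.value} some other update has just succeeded at that node (flipping \texttt{isleaf} or filling the slot), so the number of failures any single update suffers on a fixed node is bounded by the number of successful concurrent updates on that node; globally, at least one update succeeds per finite interval of contended CAS attempts, which is exactly the non-blocking condition.
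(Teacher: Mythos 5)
Your argument follows essentially the same route as the paper's proof sketch: enumerate the synchronization points (the CASes on \texttt{left.value} and \texttt{mark}, the TAS on the $\Delta$Node lock, and \textsc{waitandcheck}), note that only \textsc{Rebalance}/\textsc{Merge} ever raise the lock so \textsc{waitandcheck} is vacuous under the hypothesis, and use the one-winner semantics of CAS with each retry charged to another operation's success. The only small divergence is the insert--delete race: the paper resolves it not by field-disjointness alone but by having the delete re-check leafness after its CAS (line \ref{lst:line:markdel-check}) and restart from the node if the leaf was concurrently grown --- a retry that is still progress-preserving, so your conclusion is unaffected.
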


\begin{proof} (Sketch)
Non-blocking update operations supported by $\Delta$Tree are possible by
assuming that any of the updates are not invoking \textsc{Rebalance} and
\textsc{Merge} operations. 
In a case of concurrent insert operations (Figure \ref{lst:pseudo-ops}) 
at the
same leaf node $x$, assuming all insert threads need to "grow" the node 
(for illustration, cf. Figure \ref{fig:treetransform}), they
will have to do \textsc{CAS}($x.left, \textbf{empty},\ldots$) 
(line  \ref{lst:line:growins1} and \ref{lst:line:growins2}) as their
first step. This CAS is the only thing needed since the whole $\Delta$Node
structure is pre-allocated and the CAS is an atomic operation. Therefore, only
one thread will succeed in changing $x.left$ and proceed populating the
$x.right$ node. Other threads will fail the CAS operation and they are going to
try restart the insert procedure all over again, starting from the node $x$.

To assure that the marking delete (line \ref{lst:line:markdel}) behaves nicely
with the "grow" insert operations, \textsc{deleteNode}($v, U$) that has found
the leaf node $x$ with a value equal to $v$, will need to check again whether
the node is still a leaf (line \ref{lst:line:markdel-check}) after completing
\textsc{CAS}($x.mark, FALSE, TRUE$).
The thread needs to restart the delete process from $x$ if it has found that $x$
is no longer a leaf node.

The absence of maintenance operations means that a $\Delta$Node $lock$ is 
never set to \textbf{true}, thus either insert/delete operations are never blocked
at the execution of line number \ref{lst:line:growins2} in Figure \ref{lst:waitandcheck}.
\end{proof}

\begin{lemma} 
In Figure \ref{lst:pseudo-ops}, \textsc{insertNode} operation has the linearisation 
point against \textsc{deleteNode} at line \ref{lst:line:growins1} and line \ref{lst:line:growins2}. 
Whereas \textsc{deleteNode} has a linearisation point at line \ref{lst:line:markdel-check}
against an \textsc{insertNode} operation. 
For inserting and deleting into a buffer of a $\Delta$Node, an \textsc{insertNode} 
operation has the \textit{linearisation point} at line \ref{lst:line:insertbuffer}. While
\textsc{deleteNode} has its linearisation point at line \ref{lst:line:bufdel}.
\label{lem:linear-ins-del}
\end{lemma}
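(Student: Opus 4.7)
The plan is to treat the four linearisation claims separately, showing in each case that at the indicated line the update takes effect atomically with respect to the opposing operation. First, for \textsc{insertNode} against \textsc{deleteNode} at a leaf, I would argue that lines \ref{lst:line:growins1} and \ref{lst:line:growins2} each execute a single \textsc{CAS} on $node.left.value$ whose success is the unique moment at which the leaf transitions from \textbf{empty} to carrying $v$. Any concurrent \textsc{deleteNode} descending to this leaf either completes its mark before the \textsc{CAS} succeeds, in which case the insert sees that $node.mark$ is already \textbf{true} and grows a leaf that is immediately treated as deleted, or arrives after and is forced to restart by the check at line \ref{lst:line:markdel-check}. In both orderings the resulting sequential history is consistent with placing the insert at its successful \textsc{CAS}, so no concurrent \textsc{deleteNode} can distinguish placing it anywhere else.

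Second, for \textsc{deleteNode} against \textsc{insertNode} at line \ref{lst:line:markdel-check}, I would stress that the atomic \textsc{CAS} at line \ref{lst:line:markdel} is not sufficient on its own because a concurrent \textsc{insertNode} may have just grown the leaf, converting it into an internal node whose $mark$ field is no longer semantically meaningful. The additional test at line \ref{lst:line:markdel-check} decides whether the mark commits or whether the delete re-descends; consequently the effective instant at which the delete takes effect is precisely this test, and no subsequent insert can undo the deletion, since any insert re-descending to that leaf will observe that the mark is \textbf{true} and behave accordingly.

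Third, for the buffer variants I would use the fact that $T_x.rootbuffer$ is a plain array whose entries are modified only at line \ref{lst:line:insertbuffer} (single assignment of $v$) and line \ref{lst:line:bufdel} (overwrite with \textbf{empty}). Each write is an atomic word store, and every concurrent \textsc{searchNode} (cf.\ Lemma \ref{lem:linear-search}) or opposing update that scans the buffer either reads the slot before the store and does not observe $v$, or reads it after and does observe $v$. Taking the linearisation points at exactly these two lines therefore produces a total order consistent with every observer's view, and the $rootbuffer$/$mirrorbuffer$ swap performed under the \textsc{rebalance} lock ensures the same is true across maintenance epochs.

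The main obstacle I anticipate is reconciling the leaf-level insert and delete orderings under cascading \emph{grow} operations on the same branch. When an insert that has just succeeded at line \ref{lst:line:growins1} or line \ref{lst:line:growins2} is immediately followed by another insert that grows one of the newly created leaves, one must verify that a delete whose \textsc{CAS} at line \ref{lst:line:markdel} targeted the original leaf is correctly retried and re-linearised against the fresh structure. I would handle this by invoking the invariant, already used in Lemma \ref{lem:linear-search}, that a grown node retains its old value in the subtree opposite to the inserted key, so the delete's restart from the same position preserves the correctness of the search path; combined with the re-check at line \ref{lst:line:markdel-check}, this ensures the restart either discards the effect of a stale mark or re-applies it at a consistent fresh linearisation point, yielding a single legal sequential history for each concurrent execution.
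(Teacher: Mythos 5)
Your proposal is correct and follows essentially the same route as the paper's own (much briefer) sketch: the insert is linearised at the successful \textsc{CAS} on $node.left.value$ at lines \ref{lst:line:growins1}/\ref{lst:line:growins2}, the delete at the leaf re-check of line \ref{lst:line:markdel-check} after the mark, and the buffered variants at the single word writes of lines \ref{lst:line:insertbuffer} and \ref{lst:line:bufdel}. Your added discussion of the mark-copy on grown leaves and of restarts after a failed leaf check only elaborates the same argument, so no further changes are needed.
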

\begin{proof}(Sketch) An \textsc{insertNode} operation will do a \textsc{CAS} on the left node
as its first step after finding a suitable $node$ for growing a subtree. If value $v$
is lower than  $node.value$, the correspondent operation is the line \ref{lst:line:growins1}.
Line \ref{lst:line:growins2} is executed in other conditions. A \textsc{deleteNode} will always
check a $node$ is still a leaf by ensuring $node.left.value$ as \textbf{empty}
(line \ref{lst:line:markdel-check}). This is done after it tries to mark that $node$. If the comparison
on line \ref{lst:line:markdel-check} returns \textbf{true}, the operation finishes successfully.
A \textbf{false} value will instruct the \textsc{insertNode} to retry again, starting from
the current node.

A buffered insert and delete are operating on the same buffer. When a value $v$ is put inside
a buffer it will always available for delete. And that goes the opposite for the deletion case.  
\end{proof}

\subsection{Memory Transfer and Time Complexities} 
 
In this subsection, we will show that $\Delta$Tree is relaxed cache oblivious
and the overhead of maintenance operations (e.g. rebalancing, expanding and
merging) is negligible for big trees. The memory transfer analysis is based on
the ideal-cache model \cite{Frigo:1999:CA:795665.796479}. Namely, re-accessing
data in cache due to re-trying in non-blocking approaches incurs no memory
transfer.

For the following analysis, we assume that values to be searched, inserted or
deleted are randomly chosen. As $\Delta$Tree is a binary search tree (BST),
which is embedded in the dynamic vEB layout, the expected height of a randomly
built $\Delta$Tree of size $N$ is $O(\log N)$ \cite{CormenSRL01}.

\begin{lemma}
A search in a randomly built $\Delta$Tree needs $O(\log_B N)$ expected memory
transfers, where $N$ and $B$ is the tree size and the {\em unknown} memory
block size in the ideal cache model \cite{Frigo:1999:CA:795665.796479}, 
respectively. 
\label{lem:DeltaTree_search}
\end{lemma}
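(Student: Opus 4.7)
The plan is to bootstrap from Lemma~\ref{lem:dynamic_vEB_search} together with the classical expected-height bound for randomly built BSTs. My first move is to reread the proof of Lemma~\ref{lem:dynamic_vEB_search} and isolate what it actually needs: it upper-bounds the memory transfers of a search by $2\lceil T/2^k\rceil$, where $T$ is the height of the tree and $k$ is the coarsest level of detail whose recursive subtrees contain at most $B$ nodes; completeness of the tree is invoked only at the very end, to replace $T$ by $\log_2 N + 1$. The locality property that proof relies on, namely that every recursive subtree of depth $2^i$ with $i \leq L$ is stored in a contiguous block of memory of size at most $UB$, is intrinsic to the dynamic vEB layout and is inherited by every $\Delta$Node, since each $\Delta$Node is a size-fixed container laid out statically by vEB independently of how many of its slots are currently occupied.

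With that in hand, I would argue that a root-to-leaf search in a $\Delta$Tree of height $T$ performs at most $2\lceil T/2^k\rceil = O(T/\log B)$ memory transfers, exactly as in Lemma~\ref{lem:dynamic_vEB_search}, using $2^k \geq \tfrac{1}{2}\log_2(B+1)$. I would then invoke the standard result \cite{CormenSRL01} that the expected height of a BST built by inserting $N$ random keys is $O(\log N)$. Taking expectation over the random insertion order, and using linearity together with $E[\lceil T/2^k\rceil] \leq E[T]/2^k + 1$, this gives $E[\text{transfers}] = O(E[T]/\log B) = O(\log N / \log B) = O(\log_B N)$, which is the claim.

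The main obstacle I expect is justifying that the intrinsic locality property of the dynamic vEB layout, proved in Lemma~\ref{lem:dynamic_vEB_search} for a complete binary tree, carries over to the possibly sparse BSTs actually stored inside $\Delta$Nodes, and to the pointer-linked collection of $\Delta$Nodes making up a $\Delta$Tree. I would settle the first point by noting that the vEB layout of a $\Delta$Node is defined on its fixed-size container, not on the set of occupied nodes, so contiguity of every recursive subtree at detail level $i \leq L$ is a property of the addressing scheme rather than of the data; any sub-path of the search that stays within a single $\Delta$Node therefore still touches at most two blocks of size $B$ per traversed subtree of depth $2^k$. The second point is absorbed by the definition of the dynamic vEB layout itself: subtrees at detail level larger than $L$ are linked by pointers, and crossing such a pointer costs at most one additional block transfer per $\Delta$Node boundary, which is already accounted for by the $2\lceil T/2^k\rceil$ bound since $2^k \leq 2^L \leq H$. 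Once these two observations are in place, the rest of the argument is a direct reuse of the proof of Lemma~\ref{lem:dynamic_vEB_search}.
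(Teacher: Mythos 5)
Your proposal is correct and follows essentially the same route as the paper: reuse the block-counting argument of Lemma~\ref{lem:dynamic_vEB_search} (at most two blocks of size $B$ per traversed subtree of depth $2^k$, with $2^k \geq \tfrac{1}{2}\log_2(B+1)$ and no subtree split across $\Delta$Node boundaries), then replace the height of a complete tree by the $O(\log N)$ expected height of a randomly built BST from \cite{CormenSRL01} to conclude $O(\log_B N)$ expected transfers. Your explicit handling of the expectation and of sparsely filled $\Delta$Nodes only makes precise what the paper's sketch leaves implicit.
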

\begin{proof}(Sketch)
Similar to the proof of Lemma \ref{lem:dynamic_vEB_search}, let $k, L$ be the
coarsest levels of detail such that every recursive subtree contains at most
$B$ nodes or $UB$ nodes, respectively. Since $B \leq UB$, $k \leq L$. There are
at most $2^{L-k}$ subtrees along to the search path in a $\Delta$Node and no
subtree of depth $2^k$ is split due to the boundary of $\Delta$Nodes (cf. Figure
\ref{fig:search_complexity}). Since every subtree of depth $2^k$ fits in a
$\Delta$Node of size $UB$, the subtree is stored in at most 2 memory blocks of
size $B$.

Since a subtree of height $2^{k+1}$ contains more than $B$ nodes, 
$2^{k+1} \geq \log_2 (B + 1)$, or $2^{k} \geq \frac{1}{2} \log_2 (B+ 1)$. 

Since a randomly built $\Delta$Tree has an expected height of $O(\log N)$, there
are $\frac{O(\log N)}{2^k}$ subtrees of depth $2^k$ are traversed in a search and
thereby at most  $2  \frac{O(\log N)}{2^k} = O(\frac{\log N}{2^k})$ memory
blocks are transferred.

As $\frac{\log N}{2^k} \leq 2\frac{\log N}{\log (B+ 1)} = 2 \log_{B+1} N \leq 2
log_B N$, expected memory transfers in a search are $O(\log_B N)$.
\end{proof}

\begin{lemma}
Insert and Delete operations within the $\Delta$Tree are having a similar 
amortised time complexity of $O(\log n + UB)$, where $n$ is the size of $\Delta$Tree,
and $UB$ is the maximum size of element stored in $\Delta$Node.
\label{lem:amortised_cost}
\end{lemma}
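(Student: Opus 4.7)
The plan is to decompose the cost of an Insert or Delete into three components: (i) the top-down traversal of $\Delta$Tree to locate the leaf or buffer slot, (ii) the in-place leaf modification (the constant-size ``grow'' subtree for insertion, or the single \textsc{CAS}-mark for deletion), and (iii) the occasionally invoked maintenance operations \textsc{rebalance}, \textsc{expand} and \textsc{merge}. The overall amortised time will be obtained by bounding each piece and then using an accounting argument for the third.

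First I would bound the traversal. Since $\Delta$Tree is a randomly built leaf-oriented BST embedded in the dynamic vEB layout, its expected height is $O(\log n)$, and the same reasoning used in Lemma \ref{lem:DeltaTree_search} for memory transfers gives $O(\log n)$ comparisons along the root-to-leaf path. The leaf modification itself touches $O(1)$ nodes and executes a single \textsc{CAS} (lines \ref{lst:line:growins1}, \ref{lst:line:growins2}, \ref{lst:line:markdel}) plus at most $O(UB)$ work in the rare case that the value is placed into the $T_x.rootbuffer$ (bounded by the thread count, which I will absorb into $UB$). Thus, ignoring maintenance, an update costs $O(\log n)$ time.

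Next I would handle maintenance by a potential-function argument. Each of \textsc{rebalance}, \textsc{expand} and \textsc{merge} traverses and re-populates a $\Delta$Node (or two) of size $UB$, and therefore costs $O(UB)$ work. The crucial fact to exploit is the density threshold stated in Section \ref{sec:overview}: \textsc{expand} is invoked only when $density > 1/2$, \textsc{rebalance} only when the stored tree has grown to height $H$ inside the $\Delta$Node, and \textsc{merge} only when $density$ of both siblings drops below $1/2$ and their union still fits into one $\Delta$Node. I would define a potential $\Phi$ proportional to $\sum_{T_x} \bigl| n_{T_x} - UB/2 \bigr|$ and show that each Insert/Delete raises $\Phi$ by $O(1)$, while each maintenance event on $T_x$ discharges $\Theta(UB)$ units of potential, because after the operation $density(T_x)$ is reset close to $1/2$. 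This gives an amortised maintenance charge of $O(1)$ per update, so the combined amortised cost is dominated by the $O(\log n)$ traversal plus the single $O(UB)$ charge incurred when the operation itself triggers maintenance, yielding the claimed $O(\log n + UB)$ bound.

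The main obstacle I expect is the accounting step: one has to verify that the invariant $density \in [1/4, 3/4]$ (or whatever precise interval the thresholds enforce) is restored after every \textsc{rebalance}, \textsc{expand} and \textsc{merge}, including the subtle cases where buffered values $T_x.rootbuffer$ are flushed into the $\Delta$Node during a maintenance step and where an \textsc{expand} splits a saturated $\Delta$Node between a parent and a freshly allocated child. Once those invariants are checked, the potential argument goes through uniformly for both Insert and Delete, establishing the symmetry claimed in the lemma.
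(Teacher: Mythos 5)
There is a genuine gap in the accounting step for \textsc{rebalance}. Your potential $\Phi \propto \sum_{T_x} \bigl| n_{T_x} - UB/2 \bigr|$ is keyed to density, but \textsc{rebalance} is not triggered by a density condition: it is triggered when an insertion lands a leaf on the last level $H$ of a $\Delta$Node (cf. Section \ref{sec:overview}), and it does not change the number of elements $n_{T_x}$ stored in that $\Delta$Node at all --- it only re-shapes the subtree. Hence a \textsc{rebalance} releases no potential under your $\Phi$, and the claim that ``each maintenance event on $T_x$ discharges $\Theta(UB)$ units of potential because density is reset close to $1/2$'' fails for exactly the maintenance operation that dominates the insertion cost. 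Worse, a sparse but deep $\Delta$Node (density just below the expansion threshold) can reach depth $H$ again after only a few further insertions along one path, so $O(UB)$-cost rebalances can recur without $\Theta(UB)$ intervening updates; an $O(1)$ amortised maintenance charge per update is therefore not available, and your argument as stated does not establish even the weaker claimed bound for Insert. A second, smaller defect: \textsc{expand} creates a nearly empty $\Delta$Node, which \emph{increases} your potential by $\Theta(UB)$ in one step, again contradicting the ``each update raises $\Phi$ by $O(1)$'' premise (this is harmless for the $O(\log n + UB)$ bound but sinks the uniform potential argument).

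For comparison, the paper does not attempt a potential-function analysis at all. It uses a crude aggregate argument: each of the $n$ insertions costs $O(\log n)$ for the traversal and is charged the full cost of at most one rebalance, $O(UB + UB\log UB)$ (read, sort, re-insert within a single $\Delta$Node), while \textsc{expand} is $O(1)$; each deletion is charged one merge of cost at most $2\,UB$. Summing and dividing by $n$ gives amortised $O(\log n + UB\log UB)$ for Insert and $O(\log n + UB)$ for Delete, which the paper then reads as $O(\log n)$ under the assumption $UB \ll n$. If you want to salvage a sharper per-operation charge than the paper's, you would need a potential that measures depth excess or imbalance inside each $\Delta$Node (so that \textsc{rebalance} provably releases potential proportional to its cost), together with a verified invariant on the leaf depths restored after each rebalance; the density interval you propose to track is the right tool only for \textsc{merge}.
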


\begin{proof} (Sketch) An insertion operation at $\Delta$Tree is tightly coupled
with the rebalancing and expanding algorithm. 

We assume that $\Delta$Tree was built using random values, therefore the 
expected height is $\mathcal{O}(\log n)$. 
Thus, an insertion on a $\Delta$Tree costs $\mathcal{O}(\log n)$. 
Rebalancing after insertion only happens
at single $\Delta$Node, and it has an upper bound cost of $\mathcal{O}(UB + UB \log UB)$,
because it has to read all the stored elements, sort it out and re-insert it in a balanced fashion.  
In the worst possible case for $\Delta$Tree, there will be an $n$ insertion 
that cost $\log n$ and there is at most $n$ rebalancing operations with a cost of
$\mathcal{O}(UB + UB \log UB)$ each. 

Using aggregate analysis, we let total cost for insertion 
as $\displaystyle\sum_{k=1}^{n} c_{i} \leqslant n\log n + \displaystyle\sum_{k=1}^{n} UB + UB \log UB
\approx n \log n + n \cdot (UB + UB \log UB)$. Therefore the amortised time complexity 
for insert is $\mathcal{O}(\log n + UB + UB \log UB)$. If we have defined $UB$ such as 
$UB<<n$, the amortised time complexity for inserting a value into $\Delta$Tree 
is now becoming $\mathcal{O}(\log n)$.

For the expanding scenarios, an insertion will trigger \textsc{expand}($v$)
whenever an insertion of $v$ in a $\Delta$Node $T_j$ is resulting on $depth(v) = H(T_j)$ and
$|T_j|\geqslant (2^{H(T_j)-1})-1$. An expanding will require a memory allocation
of a $UB$-sized $\Delta$Node, cost merely $\mathcal{O}(1)$, together with two pointer
alterations that cost $\mathcal{O}(1)$ each. In conclusion, we have shown that
the total amortised cost for insertion, that is incorporating both rebalancing
and expanding procedures as $\mathcal{O}(\log n)$.

In the deletion case, right after a deletion on a particular $\Delta$Node 
will trigger a merging of that $\Delta$Node with its sibling in a condition 
of at least one of the $\Delta$Nodes is filled less than half
of its maximum capacity ($density(v) < 0.5$) and all values from both 
$\Delta$Nodes can fit into a single $\Delta$Node.

Similar to insertion, a deletion in $\Delta$Tree costs $\log n$. However merging that combines
2 $\Delta$Nodes costs $2UB$ at maximum. Using aggregate analysis, 
the total cost of deletion could be formulated as
$\displaystyle\sum_{k=1}^{n} c_{i} \leqslant n\log n + \displaystyle\sum_{k=1}^{n} 2\cdot UB
\approx n \log n + 2n \cdot UB$. The amortised time complexity is 
therefore $\mathcal{O}(\log n + UB)$ or $\mathcal{O}(\log n)$, if $UB<<n$.

\end{proof}


\section{Experimental Result and Discussion}\label{sec:evaluation}


To evaluate our conceptual idea of $\Delta$Tree, we compare its implementation 
performance with
those of STM-based AVL tree (AVLtree), red-black tree (RBtree), and Speculation
Friendly tree (SFtree) in the Synchrobench benchmark
\cite{synchrobench}. We also have developed an STM-based 
binary search tree which is based on the work of \cite{BrodalFJ02} utilising 
GNU C Compiler's STM implementation from the version 4.7.2 .
This particular tree will be referred as VTMtree, and it has all the traits of vEB tree
layout, although it only has a fixed size, which is pre-defined before the
runtime. Pthreads were used for concurrent threads 
and the GCC were invoked with -O2 optimisation to
compile all of the programs. 

The base of the conducted experiment consists of running a series
of ($rep=100,000,000$) operations. Assuming we have $nr$ as the number of
threads, the time for a thread to finish a sequence of $rep/nr$ operations will
be recorded and summed with the similar measurement from the other threads. We
also define an update rate $u$ that translates to $upd = u\% \times rep$ number of
insert and delete operations and $src = rep - upd$ number of search operations
out of $rep$. We set a consecutive run for the experiment to use a combination of
update rate $u=\{0, 1, 3, 5, 10, 20, 100\}$ and number of thread 
$nr=\{1, 2, \ldots, 16\}$ for each runs.
Update rate of 0 means that only searching operations were conducted, while 100
update rate indicates that no searching were carried out, only insert and delete
operations. For each of the combination above, we pre-filled the initial tree using 
1,023 and 2,500,000 values.
A $\Delta$Tree with initial members of 1,023 increases the chances that a thread
will compete for a same resources and also simulates a condition where the whole tree
fits into the  cache. The initial size of 2,500,000 lowers the chance of thread contentions 
and simulates a big tree that not all of it will fits into the last level of cache.
The operations involved (e.g. searching, inserting or deleting) used random
values $v \in (0, 5,000,000], v \in \mathbb{N}$, as their parameter 
for searching, inserting or deleting. Note that VTMtree is fixed in size, 
therefore we set its size to 5,000,000 to accommodate this experiment.

The conducted experiment was run on a dual Intel Xeon CPU E5-2670, for a total of
16 available cores. The node had 32GB of memory, with a 2MB L2 cache and 
a 20MB L3 cache. The Hyperthread feature of the processor was turned off to 
make sure that the experiments only ran on physical cores.
The performance (in operations/second) result for update operations were calculated 
by adding the number successful insert and delete. While searching performance were
using  the number of attempted searches. Both were divided by the total time
required to finish $rep$ operations. 

In order to satisfy the locality-aware properties of the $\Delta$Tree, we need 
to make sure that the size of $\Delta$Nodes, or $UB$, 
not only for Lemma \ref{lem:dynamic_vEB_search} to hold true, 
but also to make sure that all level of the memory hierarchy (L1, L2, ... caches) are
efficiently utilised, while also minimising the frequency of false sharing in a
highly contended concurrent operation. For this we have tested various value for
$UB$, using $127$, $1K-1$, $4K-1$, and $512K-1$ sized elements, and by assuming a node size
in the $\Delta$Node is 32 bytes. These values will correspond to $4$ Kbytes (page
size for most systems), L1 size, L2 size, and L3 size respectively.
Please note that L1, L2, and L3 sizes here are measured in our test system. Based on
the result of this test, we found out that $UB = 127$ delivers the best
performance results, in both searching and updating. This is in-line with the
facts that the page size is the block size used during memory allocation
 \cite{Smith:1982:CM:356887.356892, Drepper07whatevery}. This improves 
the transfer rate from main memory to the CPU cache. 
Having a $\Delta$Node that
fits in a page will help the $\Delta$Tree in exploiting the data locality
property.

\begin{figure}\centering
\resizebox{0.65\columnwidth}{!}{\input{perf-1023}}
\caption{Performance rate (operations/second)  of a $\Delta$Tree with 1,023 initial members. The y-axis indicates the rate of operations/second.}
\label{fig:perf-1023}
\end{figure}

As shown in Figure \ref{fig:perf-1023}, under a small tree setup, 
the $\Delta$Tree has a better update
performance (i.e. insertion and deletion) compared to the other trees, whenever
the update ratio is less than 10\%. From the said figure, 10\% update ratio
seems to be the cut-off point for $\Delta$Tree before SFtree, AVLtree, and
RBtree gradually took over the performance lead. Even though the update rate of
the $\Delta$Tree were severely hampered after going on higher than 10\% update
ratio, it does manage to keep a comparable performance for a small number of threads. 

For the search performance evaluation using the same setup, $\Delta$Tree
is superior compared to other types of tree when the search ratio higher than
90\% (cf. Figure \ref{fig:perf-1023}). In the extreme case of 100\% search ratio
(i.e. no update operation), $\Delta$Tree does however get beaten by the
VTMtree.

\begin{figure}\centering
\resizebox{0.65\columnwidth}{!}{\input{perf-2500000}}
\caption{Performance rate (operations/second) of a $\Delta$Tree with 2,500,000 initial members. The y-axis indicates the rate of operations/second.}
\label{fig:perf-2500000}
\end{figure}

On the other setup, the big tree setup with an initial member of 2,500,000 nodes
(cf. Figure \ref{fig:perf-2500000}), a slightly different result on update performance can be
observed. Here the $\Delta$Tree maintains a lead in the concurrent update
performance up to 20\% update ratio. Higher ratio than this diminishes the
$\Delta$Tree concurrent update performance superiority. Similar to what can bee
seen at the small tree setup, during the extreme case of 100\% update ratio
(i.e. no search operation), the $\Delta$Tree seems to be able to kept its pace
for 6 threads, before flattening-out in the long run, losing out to the SFtree,
AVLtree, and RBtree.  VTMtree update performance is the worst.

As for the concurrent searching performance in the same setup, the $\Delta$Tree
outperforms the other trees when the search ratio is
less than 100\%. At the 80 \% search ratio, the VTMtree search performance is the
worst and the search performance of the other four trees is comparable. At the extreme case of
100\% search ratio, the VTMtree performance is the best.

The $\Delta$Tree performs well in the low-contention situations. Whenever
the a big tree setup is used, the $\Delta$Tree delivers scalable updating and
searching performance up to 20\% update ratio, compared to only 10\% update
ratio in the small tree setup.
The good update performance of $\Delta$Tree can be attributed to the dynamic vEB
layout that permits that multiple different $\Delta$Nodes can be concurrently updated and
restructured. Keeping the frequency of restructuring done by \textsc{Merge} and 
\textsc{Rebalance} at low also contribute to this good performance. 
In terms of searching, the $\Delta$Tree have been showing an overall good
performance, which only gets beaten by the static vEB layout-based VTMtree at
the extreme case of 100\% searching ratio.

In order to get better insight into the performance $\Delta$Tree, we conducted additional 
experiment targeting the cache behaviour of the different trees. In this experiment, two flavours
of $\Delta$Tree, one using $\Delta$Node size of 127 and another using a size of
5,000,000, together with both VTMtree and SFtree were put to do 100M searching
operations. Big $\Delta$Node size in this experiment simulates a {\em leaf-oriented} static
vEB, with only 1 $\Delta$Node involved, whereas the VTMtree simulates a {\em
original} static vEB where values can be stored at internal nodes.
Those trees are pre-filled with 1,048,576 random non-recurring numbers
within $(0,5,000,000]$ range. The values searched for were randomly picked as
well within the same range. Cache profiles were then collected using Valgrind \cite{valgrind}
Our test system has 20MB of CPU's L3 cache,
therefore the pre-initialised nodes were not entirely contained within the cache
($1048576 \times 32$B $> 20$MB). This experiment result in Table
\ref{tbl:cache} proved that using the dynamic vEB layout were indeed able to reduce
the number of cache misses by almost 2\%. This is observed by comparing the 
percentage of cache misses between leaf-oriented static vEB $\Delta$Tree ($UB=5M$) 
and leaf-oriented dynamic vEB $\Delta$Tree ($UB=127$). 
However it doesn't translate to a higher update rate due to increasing load count. 

It is interesting to see that VTMtree is able to deliver the lowest load count as well as
the lowest number of cache misses. This result leads us to conclude that using
leaf-oriented tree for the sake of supporting scalable concurrent updates, has a
downside of introducing more cache misses. This can be related to the fact that a search
in leaf-oriented tree has to always traverse all the way down to the leaves.
Although using dynamic vEB really improves locality property, traversing down
further to leaf will cause data inside the cache to be replaced more often.

\begin{table}[!t]
\caption{Cache profile comparison during 100\% searching}
\label{tbl:cache}
\centering
\begin{tabular}{| p{1.5in} || p{0.8in} | p{1in} | p{0.5in}| p{0.8in} |}
\hline
\textbf{Tree} 	& \textbf{Load count} & \textbf{Last level cache miss} & \textbf{\%miss} &  \textbf{operations /second}\\
\hline
$\Delta$Tree ($UB=5M$)	&	4301253301	& \multicolumn{1}{|r|}{454239096}	& \multicolumn{1}{|r|}{10.56}	& \multicolumn{1}{|r|}{469942} \\
\hline
$\Delta$Tree ($UB=127$)	&	4895074596	& \multicolumn{1}{|r|}{435682140}	& \multicolumn{1}{|r|}{8.90}	& \multicolumn{1}{|r|}{429945} \\
\hline
SFtree		&	3675847131	& \multicolumn{1}{|r|}{406925489}	& \multicolumn{1}{|r|}{11.07}	& \multicolumn{1}{|r|}{85473} \\
\hline
VTMtree		&	1140447360	& \multicolumn{1}{|r|}{62794247}	& \multicolumn{1}{|r|}{5.51}	& \multicolumn{1}{|r|}{2261378}\\
\hline
\end{tabular}
\end{table}

The bad performance of VTMtree's concurrent update on both of the tree setups
are inevitable, because of the nature of static tree layout. The VTMtree needs to
always maintain a small height, which is done by {\em incrementally}
rebalancing different portions of its structure\cite{BrodalFJ02}. In case of
VTMtree, the whole tree must be locked whenever rebalance is executed, 
blocking other operations. While \cite{BrodalFJ02} explained
that amortised cost for this is small, it will hold true only in when
implemented in the sequential fashion.





\section{Related Work}\label{sec:relatedWork}

The trees involved in the benchmark section are not all the available implementation of the concurrent binary search tree. A novel non-blocking BST was coined in \cite{EllenFRB10}, which subsequently followed by its k-ary implementation \cite{Brown:2011:NKS:2183536.2183551}. These researches are using leaf-oriented tree, the same principle used by $\Delta$Tree and it has a good concurrent operation performance. However the tree doesn't focus on high-performance searches, as the structure used is a normal BST. CBTree \cite{Afek:2012:CPC:2427873.2427875} tried to tackle good concurrent tree with its counting-based self-adjusting feature. But this too, didn't look at how an efficient layout can provide better search and update performance. 
 
Also we have seen the work on concurrent cache-oblivious B-tree\cite{BenderFGK05}, which provides a good overview on how to combine efficient layout with concurrency. However its implementation was far from practical. The recent works in both \cite{Crain:2012:SBS:2145816.2145837,Crain:2013:CBS:2529818.2529849} provides the current state-of-the art for the subject. However none of them targeted a cache-friendly structure which would ultimately lead to a more energy efficient data structure.  


\section{Conclusions and Future Work}\label{sec:conclusions}

We have introduced a new {\em relaxed cache oblivious} model that enables high
parallelism while maintaining the key feature of the original cache oblivious
(CO) model \cite{Prokop99} that analyses for a simple two-level memory are
applicable for an unknown multilevel memory. Unlike the original CO model, the
relaxed CO model assumes a known upper bound on unknown memory block sizes $B$
of a multilevel memory. The relaxed CO model enables developing highly
concurrent algorithms that can utilize fine-grained data locality as desired by
energy efficient computing \cite{Dally11}. 

Based on the relaxed CO model, we have developed a novel {\em dynamic} van Emde
Boas (dynamic vEB) layout that makes the vEB layout suitable for
highly-concurrent data structures with update operations. The dynamic vEB
supports dynamic node allocation via pointers while maintaining the optimal
search cost of  $O(log_B N)$ memory transfers for vEB-based trees of size
$N$ without knowing memory block size $B$. 

Using the dynamic van Emde Boas layout, we have developed $\Delta$Tree that
supports both high concurrency and fine-grained data locality. $\Delta$Tree's
{\em Search} operation is wait-free and its {\em Insert} and {\em Delete}
operations are non-blocking to other {\em Insert, Delete} and {\em Search}
operations. $\Delta$Tree is relaxed cache oblivious: the expected memory
transfer costs of its {\em Search, Delete} and {\em Insert} operations are  
$O(log_B N)$, where $N$ is the tree size and $B$ is unknown memory block size in
the ideal cache model \cite{Frigo:1999:CA:795665.796479}. Our experimental
evaluation comparing $\Delta$Tree with AVL, red-black and speculation-friendly
trees from the the Synchrobench benchmark \cite{synchrobench}
has shown that $\Delta$Tree achieves the best performance when the update
contention is not too high.

\section*{Acknowledgment}

The research leading to these results has received funding from the European Union Seventh Framework Programme (FP7/2007-2013) under grant agreement n$^{\circ}$611183 (EXCESS Project, www.excess-project.eu).

The authors would like to thank Gerth St{\o}lting Brodal for the source code used in \cite{BrodalFJ02}. Vincent Gramoli who provides the source code for Synchrobench from \cite{Crain:2012:SBS:2145816.2145837}. The Department of Information Technology, University of Troms{\o} for giving us access to the Stallo HPC cluster.



%
\bibliographystyle{alpha}
\bibliography{TR-CS-UIT-2013-74}

\end{document}